\newcommand{\rr}{\mathbb{R}}
\newcommand{\ra}{\rightarrow}
\newcommand{\expect}[1]{\mathbb{E}\left[ #1 \right]}
\newcommand{\expectsub}[2]{\mathbb{E}_{#1}\left[ #2 \right]}
\DeclareMathOperator*{\argmin}{arg\,min}
\DeclareMathOperator*{\argmax}{arg\,max}
\newtheorem{theorem}{Theorem}
\newtheorem{lemma}[theorem]{Lemma}
\newtheorem{corollary}[theorem]{Corollary}
\newtheorem{example}[theorem]{Example}
\newtheorem{prop}[theorem]{Proposition}
\newtheorem{proposition}[theorem]{Proposition}
\newtheorem{definition}[theorem]{Definition}
\newtheorem{shortexercise}[theorem]{Short Exercise}
\newtheorem{exercise}[theorem]{Exercise}
\title{Entry Barriers in Content Markets}
\author{Haiqing Zhu\footnote{Email addresses: \{haiqing.zhu, yunkuen.cheung, lexing.xie\}@anu.edu.au }}
\author{Lexing Xie}
\author{Yun Kuen Cheung}
\affil{School of Computing, Australian National University}
\date{August 2025}
\begin{document}

\maketitle

\begin{abstract}
The prevalence of low-quality content on online platforms is often attributed to the absence of meaningful entry requirements. This motivates us to investigate %
whether %
implicit or explicit
entry barriers, alongside appropriate reward mechanisms, can enhance content quality. We present the first game-theoretic analysis of two distinct types of entry barriers in online content platforms. The first, a \emph{structural} barrier, emerges from the collective behaviour of incumbent content providers which disadvantages new entrants. We show that both rank-order and proportional-share reward mechanisms induce such a structural barrier at Nash equilibrium.
The second, a \emph{strategic} barrier, involves the platform proactively imposing entry fees to discourage participation from low-quality contributors. We consider a scheme in which the platform redirects some or all of the entry fees into the reward pool. We formally demonstrate that this approach can improve overall content quality.
Our findings establish a theoretical foundation for designing reward mechanisms coupled with entry fees to promote higher-quality content and support healthier online ecosystems.

\end{abstract}
\section{Introduction}

Online content markets, such as those found on platforms like YouTube, Spotify and Medium, are dynamic ecosystems where content is constantly created, consumed, and monetized. A reward mechanism, 
driven by the platform but also influenced by other stakeholders like advertisers and enterprise sponsors, is used to allocate %
revenues among content creators.
A defining characteristic of these markets is their high turnover rate: much of the content (e.g., news, sports highlights) becomes obsolete within a short time frame, leading to rapid cycles of attention and monetized value. The rapid cycles imply that creators are engaged in ongoing competitive games, which has motivated a stream of recent work employing game-theoretic analysis to understand the properties of the outcomes of these games.

A commonly observed outcome on many platforms is the prevalence of low-quality content. This is often attributed to the absence of meaningful entry requirements, which allows participation by contributors engaged in low-effort production. This motivates us to investigate whether implicit or explicit \emph{entry barriers}, alongside appropriate reward mechanisms, can discourage such participation and thereby enhance overall content quality.
Entry barriers have been well-studied in economics and are typically categorized into \emph{structural} and \emph{strategic} barriers. (\cite{west2005barriers}) Structural barriers arise when the game's inherent setup and outcome (e.g., Nash equilibrium) create an environment that deters new entrants. In classical economies, this occurs in Cournot markets when incumbent producers already cover market demand at marginal profitability, and even a significant price reduction would not generate enough additional demand to make entry worthwhile for new producers.

On the other hand, strategic barriers involve a stakeholder, say the platform in a content market or one of the dominant incumbents, \emph{changing the game} by proactively introducing additional incentives %
to make market entry more difficult. One key insight from this work is by strategically imposing entry fees, the platform can deter low-quality contributors; furthermore, by redirecting these fees into the reward pool, the platform has the flexibility to tune the existing reward mechanism in ways that incentivize higher-quality content.%

We present the first game-theoretic analysis of both structural and strategic entry barriers in online content markets, focusing on two widely studied reward mechanisms, the Rank-Order Mechanism (RO) and the Proportional Mechanism (PM). 
In an RO mechanism, the platform (via recommender system) and other stakeholders rank content by quality and allocate rewards based on these rankings. The RO family includes the well-known Top‑$k$ mechanisms.
In contrast, PM rewards creators in proportion to the quality of their content items, a scheme commonly used by online streaming platforms. (\cite{youtubePartnerProgram, TikTokReward}) In \Cref{sec: structural-barrier}, we show that under a fixed and limited total reward, and under mild assumptions on the RO mechanisms, 
the Nash equilibrium strategies adopted by incumbent content creators form a structural barrier that deters new entrants.
For PM, we establish that the marginal cost of producing low-quality content is the decisive factor of whether new market entry is profitable.
In \Cref{sec: strategic-barriers}, we formally define the Entry Fee Reallocation Mechanism (EFRM). We show that two instances of EFRM, which we call Max-Min Reallocation and Max-Max Reallocation, respectively maximize the average quality level and the expected maximum quality level at their corresponding Nash equilibrium outcomes.

The remainder of the paper begins with a discussion of related work, followed by a formal description of our model in \Cref{sec: model}. \Cref{sec: structural-barrier,sec: strategic-barriers} present our analyses of the two types of entry barriers. Due to space constraints, all proofs of formal results are deferred to the appendix.

\negspace
\negspace
\paragraph{Related work} 
On the reward mechanismss of online content platforms, \citet{GH14} studied the equilibrium properties under both RO and PM with unbounded long-term 
rewards, while our work focuses on bounded short-term rewards. \citet{ben2018game} connects reward mechanisms to the design of recommender systems, and proposed a Shapley-value-based mechanism that enables both stability and fairness. %
Following this, game-theoretic analyses were conducted to characterise equilibria under different %
mechanisms, including Top-$1$ (\cite{jagadeesan2023supply}), Top-$k$ (\cite{yao2023howBad}), generalised Shapley-value-based approach (\cite{yao2024rethinking}), and probabilistic approaches (\cite{hron2023Modeling, qian2024digital}). RO (studied in this work) is a strict %
generalization of Top-$1$ and Top-$k$ mechanisms.
The creator competition games induced by PM rewards are %
equivalent to the classical%
~\cite{tullock1967welfare} contest, which was recently analysed by \cite{yao2024unveiling,yao2024human}.
Under an auction setting, \citet{mirrokni2010quasi} studied the quasi-proportional mechanism, which can be viewed as a special case of PM.

In theories of competition in economics, barrier to entry generally means \emph{an impediment that makes it more difficult for a firm to enter a market}. (\cite{mcafee2004whatIsBarrierToEntry,west2005barriers})
While it is sometimes blamed for causing monopoly and oligopoly, in other settings it is found to be effective in protecting consumers and the public.
In the context of content creator competition, \citet{ghosh2011incentivizing} proposed a creator selection and elimination mechanism to elicit both high quality content and high user participation. More recently, \citet{hossain2024dataMarket} demonstrated that charging transaction costs in data markets can lead to stronger welfare guarantees.

Competitions among production firms have been a fundamental topic in economics. The first analytical model can be traced back to the \cite{cournot1838recherches} competition in the 19th century. In classical economies, the main point of query is how the competition induces the production quantities and prices at equilibrium. In contrast, content in online platforms can be replicated indefinitely at essentially no cost and hence zero-priced, and attention becomes the currency of trading.

\section{Model} \label{sec: model}
Consider a content market with a set of  creators
$[n] = \{1,2,\ldots,n\}$. Each creator $i$ produces an item with a  chosen \emph{quality level} $q_i \in [0,1]$, %
and they have a cost function $c_i: [0,1] \ra \rr_{\geq 0}$, which maps the quality level of their item to the cost of production. 
Economic intuitions let us set $c_i(0) = 0$. We further assume that $c_i$ is strictly increasing
and twice differentiable.

Upon receiving the items produced by the creators, the platform (via users) determine the reward for each creator.
A \emph{reward mechanism} is a function $\rAll: [0,1]^n \ra \rr_{\geq 0}^n$,
which maps quality values of all items to rewards for each creator.
Following standard game-theoretic notation, the utility of creator $i$ is the difference between their reward and their cost:
\negspace
\negspace
\[
u_i(q_i, \qAll_{-i}) = r_i(q_i, \qAll_{-i}) - c_i(q_i)
\]
\negspace
\negspace
where $\qAll_{-i} = (q_1, \ldots,q_{i-1}, q_{i+1}, \ldots, q_n)$ denotes the item quality from all other creators.

The normalised quality level $q_i \in [0,1]$ can be interpreted as the probability of user engaging with content $i$ after being exposed. (\cite{salganik2006experimental,zhu2023stability}) Despite being simple, it captures the main intuitions in analyzing personalized quality vectors, namely, the probability of engagement being the generalised inner product between quality and user preference~\cite{yao2023howBad}, and the magnitude of quality vector correspond to creator effort~\cite{jagadeesan2023supply}.

\paragraph{Nash Equilibrium (NE)} 
One can view this content market
as a non-cooperative game among the creators. %
The action space of each creator is the domain of quality level $[0,1]$.
Let $\Delta([0,1])$ be the space of probability distributions over $[0,1]$. A mixed strategy profile $\mathbf{F} = (F_1, \ldots, F_n) \in \Delta([0,1])^\numItems$ denotes a quality distribution for each creator, where each $F_i$ is a CDF over the domain $[0,1]$.

\begin{definition}
    A profile $\mathbf{F}^\star$ is a mixed strategy Nash equilibrium when for every $i\in \mathcal{I}$,
    \negspace
    \[
 F^\star_i  ~\in~ \argmax_{F_i \in \Delta([0,1])} ~ \expectsub{(q_i,\q_{-i})\sim (F_i,\mathbf{F}^\star_{-i})}{u_i(q_i, \q_{-i})}.
    \]
    \negspace
    $\mathbf{F}^\star$ is a pure Nash equilibrium if each $F_i^\star$ is a point mass 
    of probability $1$.
    \label{def: Nash Equilibrium}
\end{definition}

The online content market features rapid cycles of attention. Accordingly, creators typically adopt an agile production pipeline that allows for mobility and flexibility. This justifies the use of mixed NE as a solution concept, where creators can randomize and experiment across multiple attention cycles.
We consider two reward mechanisms throughout this work.

\paragraph{Rank-Order (RO) Mechanism} 
assigns monotone rewards $\alpha_1 \geq \alpha_2 \geq \ldots\geq \alpha_n \geq 0$ so that the $i$-th ranked creator by quality will receive reward $\alpha_i$.
Formally, for the permutation $\sigma: \{1,\ldots,\numItems\} \rightarrow \{1,\ldots,\numItems\}$ satisfying $q_{\sigma(1)} > \ldots > q_{\sigma(\numItems)}$, with ties broken at random, the reward is $r_i^{\text{RO}}(q_i,\q_{-i}) = \alpha_{\sigma^{-1}(i)}$. %
Without loss of generality, we assume bounded total reward $\sum_{i\in \itemSet} \alpha_i \leq 1$. 
RO is a broad family, which include the Top-$k$ mechanism widely used in practice, where the best $k$ creators share all the rewards, and $\alpha_j = 0$ for $k+1\le j\le n$. %

\paragraph{Proportional Mechanism (PM)} assigns the rewards proportional to their quality, i.e.,
$r_i^{\text{PM}}(q_i, \q_{-i}) = \frac{q_i}{\sum_{j\in \itemSet}q_j}$.
We note that content market with PM rewards are mathematically analogous to the famous Tullock contest,
with quality in content market corresponding to quantity of production in Tullock contest.
Proportional allocation traces back to mechanisms for network rate control~\cite{kelly1998rate} and systems scheduling~\cite{stoica1996proportional}, and applied to user-generated content~\cite{ghosh2011incentivizing}. It closely resembles the multinomial logit choice model in economics, and often seen as a natural and fair mechanism.

Most recent work (see appendix) assumes one of the mechanisms is at play
but the different properties in the two mechanisms lead to very different conclusions. 
\cite{GH14} was the first to identify RO and PM leads to different content qualities.
This work is the first that discusses entry barriers of both mechanisms in online platforms.

\section{Structural Barriers} \label{sec: structural-barrier}

In this section, we characterise the NE of both mechanisms and study the structural entry barriers incurred by the inherent equilibrium structures of the mechanisms. Specifically, under RO, given the reward structure unchanged, we show that there is no incentive for a new creator to join the platform. Under PM, the ability of a creator to produce low-quality content is the determining factor of whether the creator can enter the market.

\subsection{Structural Barriers under Rank-Order Mechanism} \label{sec: RO}

It is informative to study the properties of NE under RO before presenting the main result on structural barriers.
Denote the vector of {\it descending} rewards as $\Alpha = (\alpha_1,\alpha_2,\ldots,\alpha_n)$; we assume $\alpha_n = 0$ wlog.
When all creators $i\in [n]$ have the same cost function $c_i = c$, the market have a symmetric mixed strategy NE; we denote the CDF of the quality that each creator chose as $\ROcdf$. That is, all creators have the identical competitive advantage, and who ranks on top to get higher rewards is down to a random draw according to the mixed strategy. We define a helper function $h_n(\Alpha,t) = \sum_{i=0}^{n-1} \alpha_{i+1}\binom{n-1}{i}t^{n-1-i} (1 - t)^{i}$, where quantile value $t\in [0,1]$. One can see that each term in this function corresponds to the probability item at the $t^{\text{th}}$ quantile being ranked at the $(i+1)^{\text{st}}$ position and receive reward $\alpha_{i+1}$.
Another critical quantity is $c(1)$, the cost incurred to create the {\it perfect} content at $q=1$. The magnitude of rewards relative to $c(1)$ drives creators' strategies and utilities. 
\citet{GH14} demonstrated the existence of symmetric mixed strategy NE. \Cref{prop: RO-equilibrim-eq} explicitly computes this NE.

\begin{restatable}{theorem}{ROChar} When all creators have the same cost function $c$, the symmetric mixed NE under RO follows one of the three cases:
    \begin{enumerate}[leftmargin=0.2in]
        \item If $c(1) \ge \alpha_1$, $F$ is determined by the functional equation $h_n(\Alpha,F(q)) = c(q)$,
        where the LHS, $h_n(\Alpha,F(q))$, is the expected reward of each creator.
        The support of $F$ is $[0,q_{\max}]$, where $q_{\max} = c^{-1}(\alpha_1)$.
        The expected profit of each provider is $0$.
        \item If $c(1) \le \frac 1n \sum_{i=1}^n \alpha_i$, $F$ has a point mass at $q=1$ of probability $1$.
        The expected profit of each creator is $\left(\frac 1n \sum_{i=1}^n \alpha_i\right) - c(1)$.
        \item If $\frac 1n \sum_{i=1}^n \alpha_i < c(1) < \alpha_1$, $F$ has a point mass at $q=1$ of probability $y$,
        where $y$ satisfies the equality $h_n(\Beta,1-y) = c(1)$, with $\Beta = (\alpha_1,\frac{\alpha_1+\alpha_2}{2},\frac{\alpha_1+\alpha_2+\alpha_3}{3},\ldots,\frac{\sum_{i=1}^n \alpha_i}{n})$. $F$ also has support on $[0,\hat{q}]$, where $\hat{q} = c^{-1}(h_n(\Alpha,1-y))$. On this interval, $F$ is determined by the functional equation $h_n(\Alpha,F(q)) = c(q)$.
        The expected profit of each provider is $0$.
    \end{enumerate}
\label{prop: RO-equilibrim-eq}\negspace
\end{restatable}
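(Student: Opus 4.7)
The plan is to build on the standard mixed-Nash indifference principle applied to a symmetric profile. First I would verify that under any symmetric profile with CDF $F$, a creator who deviates to quality $q$ (at a continuity point of $F$) receives expected reward exactly $h_n(\Alpha,F(q))$: conditional on $i$ of the other $n-1$ creators drawing above $q$ and $n-1-i$ below, the deviator ranks $(i+1)$-th and collects $\alpha_{i+1}$, with binomial weight $\binom{n-1}{i}F(q)^{n-1-i}(1-F(q))^{i}$. Monotonicity of $\Alpha$ makes $h_n(\Alpha,\cdot)$ nondecreasing from $\alpha_n=0$ to $\alpha_1$, which ensures every inverse used below is well-defined.

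For Case~1 ($c(1)\ge \alpha_1$), since the maximum attainable reward $\alpha_1$ never covers $c(1)$, no creator would place mass at $q=1$; moreover, any atom at an interior $q^\star$ would create a jump in the expected reward of a marginal deviator and break indifference on a neighborhood. Thus $F$ is atomless on a connected support $[0,q_{\max}]$, and the baseline utility is pinned down by $q=0$, where reward $\alpha_n$ and cost both vanish. Indifference then forces $h_n(\Alpha,F(q))=c(q)$ across the support, which together with $F(q_{\max})=1$ gives $q_{\max}=c^{-1}(\alpha_1)$ and equilibrium profit $0$.

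For Case~2 ($c(1)\le \tfrac{1}{n}\sum_i\alpha_i$), I would directly verify the pure profile $q_i\equiv 1$: random tie-breaking yields an average reward $\tfrac{1}{n}\sum_i \alpha_i$, hence nonnegative utility, while any unilateral deviation to $q<1$ strictly underranks the deviator, giving reward $\alpha_n=0$ and utility $-c(q)\le 0$. For Case~3, I would conjecture a mixed strategy with an atom of size $y$ at $q=1$ and a continuous part on $[0,\hat q]$. The delicate computation is the expected reward conditional on playing $q=1$: if $k$ of the other $n-1$ creators also play $1$, tie-breaking averages the top $k+1$ rewards into $\beta_{k+1}=\tfrac{1}{k+1}\sum_{j=1}^{k+1}\alpha_j$, and aggregating over $k$ with binomial weights on $y$ produces exactly $h_n(\Beta,1-y)$. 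Repeating the Case~1 indifference argument on $[0,\hat q]$ yields $h_n(\Alpha,F(q))=c(q)$ and baseline profit $0$; indifference between the continuous part and the atom gives $h_n(\Beta,1-y)=c(1)$, and matching the CDF at $\hat q^-$ gives $\hat q=c^{-1}(h_n(\Alpha,1-y))$. The hypothesis $\tfrac{1}{n}\sum\alpha_i<c(1)<\alpha_1$ together with the monotonicity of $h_n(\Beta,\cdot)$ (since the $\beta_i$ inherit the weak ordering of $\alpha_i$) yields a unique $y\in(0,1)$.

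The main obstacle is Case~3, specifically ruling out gap deviations to $q\in(\hat q,1)$. On this interval $F$ is constant at $1-y$, so the expected reward of such a deviation equals $h_n(\Alpha,1-y)=c(\hat q)<c(q)$, strictly below the incurred cost, yielding a negative utility strictly worse than the equilibrium profit of $0$. This, plus the boundary indifferences already derived, confirms that the constructed profile is indeed a symmetric NE and completes the case analysis. Throughout, a minor secondary task is to show that $F$ as defined implicitly by $h_n(\Alpha,F(q))=c(q)$ is a \emph{bona fide} CDF, which follows from strict monotonicity of $c$ and monotonicity of $h_n(\Alpha,\cdot)$.
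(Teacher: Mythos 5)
Your proof is correct and follows essentially the same route as the paper's: an indifference-plus-atom-exclusion argument pinning profit to zero via $q=0$ in Case 1, direct verification of the pure profile in Case 2, and the atom-at-one construction with tie-averaged rewards $\Beta$ in Case 3. If anything, your Case 3 is more complete than the paper's — you explicitly derive $h_n(\Beta,1-y)$ from the tie-breaking rule, and your gap-deviation argument (comparing the reward $h_n(\Alpha,1-y)=c(\hat{q})$ against the cost $c(q)>c(\hat{q})$ for $q\in(\hat{q},1)$) closes a step the paper leaves slightly loose by only comparing $h_n(\Alpha,1-y)$ with $h_n(\Beta,1-y)$.
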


\paragraph{Example: NE of a 3-creator market with (different) linear costs.}
Top-2 creators are rewarded with $\alpha_1 = \alpha_2 = \frac 12$ and $\alpha_3 = 0$.
\begin{enumerate}[leftmargin=0.2in]
\item If $c(q) = q$, by \Cref{prop: RO-equilibrim-eq} (1), the NE is obtained by solving $h_3(\Alpha,F(q)) = q$, giving $F(q) = 1 - \sqrt{1-2q}$ for $0\le q \le \frac 12$.
\item If $c(q) = \frac{q}{4}$, by \Cref{prop: RO-equilibrim-eq} (2), the NE is pure, all created content are at $q=1$.
By breaking ties randomly, the expected reward of each provider is $\frac 13 (\alpha_1 + \alpha_2 + \alpha_3) = \frac 13$.
\item If $c(q) = \frac{2q}{5}$, by \Cref{prop: RO-equilibrim-eq} (3), the probability $y$ that a creator produces a perfect content satisfies the equation $\frac {(1-y)^2}{2} + y(1-y) + \frac{y^2}{3} = \frac 25$, giving $y = \sqrt{0.6}\approx 0.7746$. Then $\hat{q} = \frac 54 (1-\sqrt{0.6})^2 + \frac 52 \sqrt{0.6}(1-\sqrt{0.6}) = 0.5$. When $0\le q \le 0.5$, by solving $h_3(\Alpha,F(q)) = \frac{2q}{5}$, we have $F(q) = 1 - \sqrt{1-\frac{4q}{5}}$.
\end{enumerate}

In the third case, the content pool splits into two distinct groups: one consisting of perfect items that target high rewards, and another consisting of lower-quality items (with quality at most $0.5$) that earn less rewards but incur lower costs.
This gap emerges due to the relatively low cost of producing perfect content, which incentivizes creators of near-perfect items to make slight improvements in quality to gain
significantly  higher rewards. In contrast, producing content within the intermediate quality range is sub-optimal, since such items cannot compete effectively with perfect content, yet their production costs are unnecessarily high to outperform low-quality, low-cost alternatives.

In the rest of this paper, we focus on the scenarios captured by the first case of \Cref{prop: RO-equilibrim-eq}, where rewards are limited relative to the cost, and creators are not incentivized to produce (almost) perfect contents, which are commonly observed in hectic content cycles of many platforms. %
Each different $\Alpha$ correspond to a different RO mechanism and thus different NE outcome. We are interested in how the platform might choose
$\Alpha$ that leads to the \emph{optimal} %
quality of contents. A natural benchmark is the expected quality of each content being created, i.e., $\expectsub{q\sim F}{q}$. However, from the social/user perspective, it is probably better to have a small number of stand-out high-quality contents, rather than a lot of mediocre contents in the platform. This suggests another natural benchmark, namely the expected maximum quality among all items created, i.e., $\expectsub{(q_1,q_2,\ldots,q_n)\sim (F,F,\ldots,F)}{\max_i q_i}$.
Mathematically, these two benchmarks correspond to $L^1$-norm and $L^\infty$-norm of $\mathbf{q} = (q_1,q_2,\ldots,q_n)$, so we also consider a smooth interpolation between them, namely $L^p$-norm for $1\le p\le \infty$.

It turns out that when the cost function is linear,
the $L^1$-optimal mechanism corresponds to ``the final elimination principle'', %
where the Top-$(n-1)$ creators share all the rewards evenly, but the last provider gets no reward. %
In stark contrast, $L^\infty$-optimal mechanism correspond to Top-$1$, or ``winner takes all''.
The next two propositions depict optimal mechanisms with more general cost functions.

\begin{restatable}{proposition}{ROOptimalMech}
Under the bounded reward $\sum_{i=1}^n \alpha_n \leq 1$,
the $L^1$-optimal mechanism %
is given by $\alpha_1 = \alpha_2 = \ldots = \alpha_{n-1} = \frac{1}{n-1}$ and $\alpha_n = 0$,
as long as the cost function $c$ is convex.  \label{prop: RO-optimal-mech-ori}
\end{restatable}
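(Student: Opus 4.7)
The plan is to work inside case~1 of \Cref{prop: RO-equilibrim-eq} throughout, as the paper has announced. Two preliminary observations simplify the picture. First, case~1 implicitly forces $\alpha_n = 0$: if $\alpha_n > 0$, a creator who plays $q = 0$ would earn expected reward $\alpha_n$ at zero cost, contradicting the zero-profit conclusion of case~1. Second, inverting the equilibrium relation $h_n(\Alpha, F(q)) = c(q)$ yields $F^{-1}(t) = c^{-1}(h_n(\Alpha, t))$, so the substitution $t = F(q)$ turns the objective into
\[
\expectsub{q \sim F}{q} \;=\; \int_0^1 c^{-1}\bigl(h_n(\Alpha, t)\bigr)\,dt \;=:\; \Phi(\Alpha).
\]
Since $c$ is convex and strictly increasing, $c^{-1}$ is concave; combined with linearity of $h_n$ in $\Alpha$, this makes $\Phi$ a concave function on the feasible region $\{\,\alpha_1 \ge \cdots \ge \alpha_{n-1} \ge 0,\ \sum_{j=1}^{n-1}\alpha_j \le 1\,\}$.

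I would then deploy the first-order characterization of concavity: for any feasible $\Alpha$,
\[
\Phi(\Alpha) - \Phi(\alpha^\star) \;\le\; \sum_{j=1}^{n} g_j\,\bigl(\alpha_j - \alpha_j^\star\bigr),\qquad g_j \;:=\; \int_0^1 (c^{-1})'\bigl(h_n(\alpha^\star, t)\bigr)\, b_j(t)\, dt,
\]
where $b_j(t) = \binom{n-1}{j-1} t^{n-j}(1-t)^{j-1}$. The $j = n$ term vanishes because $\alpha_n = \alpha_n^\star = 0$, so it suffices to show $\sum_{j=1}^{n-1} g_j(\alpha_j - 1/(n-1)) \le 0$. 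For this I would establish two facts. The first is that $(g_j)$ is nondecreasing in $j$. The key observation is that $n\,b_j$ is the $\text{Beta}(n-j+1,\, j)$ density, and the ratio $b_{j+1}(t)/b_j(t) = \tfrac{n-j}{j}\cdot\tfrac{1-t}{t}$ is strictly decreasing in $t$; hence these Beta laws are totally ordered in the monotone likelihood ratio (MLR) sense. Meanwhile $w(t) := (c^{-1})'(h_n(\alpha^\star, t))$ is nonincreasing in $t$, because $h_n(\alpha^\star, t) = (1 - (1-t)^{n-1})/(n-1)$ is increasing in $t$ and $(c^{-1})'$ is nonincreasing (as $c^{-1}$ is concave). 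Standard MLR then yields $\int w\, b_{j+1}\,dt \ge \int w\, b_j\,dt$, i.e., $g_{j+1} \ge g_j$.

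The second fact is Chebyshev's sum inequality applied to the oppositely sorted sequences $(g_j)_{j \le n-1}$ (nondecreasing) and $(\alpha_j)_{j \le n-1}$ (nonincreasing by the RO convention):
\[
\sum_{j=1}^{n-1} g_j\,\alpha_j \;\le\; \frac{1}{n-1}\Bigl(\sum_{j=1}^{n-1} g_j\Bigr)\Bigl(\sum_{j=1}^{n-1} \alpha_j\Bigr) \;\le\; \frac{1}{n-1}\sum_{j=1}^{n-1} g_j,
\]
where the last step uses $g_j \ge 0$ and $\sum_j \alpha_j \le 1$. Because $\alpha_j^\star = 1/(n-1)$ for $j \le n-1$, the right-hand side equals $\sum_{j=1}^{n-1} g_j\,\alpha_j^\star$, so rearranging gives $\sum_{j=1}^{n-1} g_j(\alpha_j - \alpha_j^\star) \le 0$ and completes the proof. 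The main obstacle is the monotonicity of $g_j$ in $j$: it hinges on identifying the Bernstein basis as scaled Beta densities and combining MLR with the concavity of $c^{-1}$, and it is the only step that uses the convexity hypothesis on $c$.
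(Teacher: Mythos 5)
Your proposal is correct, and it isolates the same analytic kernel as the paper while globalizing it differently. The paper's proof (\Cref{app-lem: RO-action-improve-mean}) works with pairwise exchanges: it computes $\partial/\partial\alpha_i - \partial/\partial\alpha_j$ of $\int_0^1 c^{-1}(h_n(\Alpha,y))\,\mathsf{d}y$ at a general $\Alpha$, notes that the two Bernstein weights cross exactly once, and uses the fact that $(c^{-1})'(h_n(\Alpha,\cdot))$ is positive and decreasing to conclude that equalizing transfers help; it then asserts that the uniform-on-top-$(n-1)$ vector is the unique exchange-stable point and hence optimal. Your single-crossing step is the same fact dressed as MLR ordering of the Beta densities $n\,b_j$, so the core lemma is shared. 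Where you genuinely diverge is the globalization: instead of a local-improvement argument, you observe that the objective is concave in $\Alpha$ (concavity of $c^{-1}$ composed with the linear map $h_n$), evaluate the supergradient only at $\alpha^\star$, and close with Chebyshev's sum inequality on the oppositely sorted sequences $(g_j)$ and $(\alpha_j)$. This buys you a direct global certificate of optimality against every feasible $\Alpha$, which is cleaner than the paper's final step --- ``no improving exchange exists, hence optimal'' is a claim of local optimality that the paper does not explicitly upgrade to global optimality (your concavity observation is exactly the missing ingredient that would justify it). Both arguments share the same regularity caveat, namely that $(c^{-1})'$ must be well defined and finite, i.e.\ $c'>0$ on the relevant range. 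Your preliminary remark that case~1 of \Cref{prop: RO-equilibrim-eq} forces $\alpha_n=0$ (else quality $0$ earns strictly positive profit, contradicting the zero-profit conclusion) is a sound justification of the constraint that the paper simply imposes by fiat in its reformulation lemma.
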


\negspace
\begin{restatable}{proposition}{ROMaxOptSol}
Under bounded total reward $\sum_{i=1}^n \alpha_n \leq 1$, %
the $L^\infty$-optimal mechanism 
 is given by $\alpha_1 = 1$ and $\alpha_2 = \alpha_3 = \ldots = \alpha_{n} = 0$, so long as the cost function satisfies $c''(q) \leq c'(q)^2$.
 \label{prop: RO-optimal-max} 
\end{restatable}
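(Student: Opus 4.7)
The plan is to recast the objective as a clean integral over $[0,1]$, and then prove Top-$1$ optimality by showing the resulting functional of $\alpha$ is Schur-convex, a property whose verification hinges on the curvature hypothesis.

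For step 1 (integral formula), Theorem~\ref{prop: RO-equilibrim-eq}(1) gives the symmetric NE CDF $F$ implicitly via $h_n(\alpha, F(q)) = c(q)$ on $[0, q_{\max}]$ with $q_{\max} = c^{-1}(\alpha_1)$. Starting from $\mathbb{E}[\max_i q_i] = \int_0^{q_{\max}} (1 - F(q)^n)\,dq$, substituting $u = F(q)$ (so $q = c^{-1}(h_n(\alpha, u))$) and integrating by parts (using $c^{-1}(0) = 0$) yields the standard order-statistic identity
\[
\mathbb{E}\!\left[\max_i q_i\right] \;=\; \int_0^1 n u^{n-1}\, c^{-1}\!\bigl(h_n(\alpha,u)\bigr)\,du,
\]
where the Bernstein expansion $h_n(\alpha, u) = \sum_{j=1}^n \alpha_j \phi_j(u)$ uses $\phi_j(u) := \binom{n-1}{j-1} u^{n-j}(1-u)^{j-1}$, and Top-$1$ corresponds to $h_n(\alpha, u) = u^{n-1}$.

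For step 2 (Schur-convexity reduction), the feasible set $\{\alpha : \alpha_1 \ge \cdots \ge \alpha_n \ge 0,\ \sum_j \alpha_j = 1\}$ (the budget binds since the integrand is strictly monotone in each $\alpha_j$) is majorized by the Top-$1$ vector, so it suffices to show the objective is Schur-convex in $\alpha$. By the Schur-Ostrowski criterion, this reduces, for each $i < j$, to
\[
\int_0^1 \Psi_\alpha(u) \, \bigl[\phi_i(u) - \phi_j(u)\bigr]\,du \;\ge\; 0, \qquad \Psi_\alpha(u) \;:=\; n u^{n-1}(c^{-1})'\!\bigl(h_n(\alpha, u)\bigr).
\]
Two ingredients should dispatch this inequality. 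First, $\phi_i(u)/\phi_j(u) = \frac{\binom{n-1}{i-1}}{\binom{n-1}{j-1}}\bigl(u/(1-u)\bigr)^{j-i}$ is strictly increasing in $u$ for $i<j$; since each $n\phi_\ell$ is a density on $[0,1]$, $n\phi_i$ dominates $n\phi_j$ in the monotone-likelihood-ratio order, so the inequality holds as soon as $\Psi_\alpha$ is non-decreasing in $u$. Second, writing $q := c^{-1}(h_n(\alpha, u))$, we have $\Psi_\alpha(u) = n u^{n-1}/c'(q)$, and at the Top-$1$ vertex (where $u^{n-1} = c(q)$) its monotonicity in $u$ reduces to the monotonicity of $c(q)/c'(q)$, equivalently $c(q)c''(q) \le c'(q)^2$. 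Because the regime of Theorem~\ref{prop: RO-equilibrim-eq}(1) forces $c(q) \le \alpha_1 \le 1$ on the support, the stated hypothesis $c''(q) \le c'(q)^2$ secures this stronger inequality.

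The main obstacle will be establishing the monotonicity of $\Psi_\alpha$ at \emph{every} feasible $\alpha$, not only at the Top-$1$ vertex: for general $\alpha$, $\Psi_\alpha'(u)$ picks up an extra term proportional to $u \cdot h_n'(\alpha, u) \cdot c''(q)/c'(q)^2$, whose magnitude must be bounded against $n-1$ using the Bernstein-polynomial structure of $h_n$. A fallback if this global bound proves elusive is to prove Top-$1$ optimality by a path argument along the straight line segment from any feasible $\alpha$ to Top-$1$, verifying non-negativity of the directional derivative pointwise via the same MLR-plus-monotonicity scheme.
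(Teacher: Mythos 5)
Your overall route is the same as the paper's: rewrite $\expect{\max_i q_i}$ as $\int_0^1 n u^{n-1}\, c^{-1}(h_n(\alpha,u))\,\mathsf{d}u$ via \Cref{prop: RO-equilibrim-eq}, then argue that transferring reward mass from a lower rank to a higher rank never decreases the objective, using the single-crossing (MLR) property of the Bernstein weights together with monotonicity of $\Psi_\alpha(u) = n u^{n-1}(c^{-1})'(h_n(\alpha,u))$. This is exactly the pairwise-exchange argument of the lemmas in \Cref{app-sec: proof-RO-optimal-max}.

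The one genuine gap is the one you flag yourself: you verify monotonicity of $\Psi_\alpha$ only at the Top-$1$ vertex, which at best shows Top-$1$ admits no improving pairwise transfer, i.e.\ is a stationary point; for global optimality the exchange inequality must hold at every feasible $\alpha$ (or along a path to Top-$1$). The missing step closes quickly, and is how the paper handles it. Differentiating the Bernstein expansion, $\partial_u h_n(\alpha,u) = (n-1)\sum_{i=0}^{n-2}(\alpha_{i+1}-\alpha_{i+2})\binom{n-2}{i}u^{n-2-i}(1-u)^{i}$, i.e.\ $(n-1)$ times a Bernstein-weighted average of the consecutive gaps, which is at most $(n-1)\alpha_1 \le n-1$. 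Hence, when $(c^{-1})''\le 0$ (the case $(c^{-1})''\ge 0$ being trivial),
\[
\Psi_\alpha'(u) \;=\; n u^{n-2}\Bigl[(n-1)(c^{-1})'(h_n) + u\,(c^{-1})''(h_n)\,\partial_u h_n\Bigr] \;\ge\; n(n-1)u^{n-2}\Bigl[(c^{-1})'(h_n) + (c^{-1})''(h_n)\Bigr],
\]
and by the inverse function theorem $(c^{-1})' = 1/c'$ and $(c^{-1})'' = -c''/(c')^3$, so the bracket is nonnegative precisely when $c'' \le (c')^2$. In other words, the $(n-1)$ coming from the power weight $u^{n-1}$ exactly absorbs the $(n-1)$ in $\partial_u h_n$, and the remaining gap-average factor is at most $1$; this is the computation you anticipated but did not carry out, and it renders your fallback path argument unnecessary.
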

We note that the assumption $c''(q) \leq c'(q)^2$ is satisfied for any concave functions, linear functions, and strictly convex functions with $(-1/c')' < 1$ such as $c(q) = (q+1)^2 - 1$.

We also show that when $p$ is sufficiently small, the $L^p$-optimal mechanism remains Top-$(n-1)$, while for all sufficiently large but finite $p$, the $L^p$-optimal mechanism is Top-$1$.
The optimal mechanism between these will further depend on the specific value of $p$ and the cost function.
\begin{restatable}{proposition}{ROOptimalMechLp}
    Suppose the total resource $\sum_{i=1}^n \alpha_n \leq 1$, $(-1/c')' < 1$ and $c(1) < \infty$. There exists $p_1,p_2$ which satisfy $1 \leq p_1 < p_2 < \infty$, such that (i) if $1 \leq  p \leq p_1$, the $L^1$-optimal mechanism %
    is the final elimination mechanism, and
    (ii) if $p \geq p_2$, the $L^p$-optimal mechanism is the Top-1 mechanism.
    \negspace \label{prop: RO-opt-mech-Lp}
\end{restatable}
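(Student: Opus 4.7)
The plan is to combine the two endpoint optimality results, \Cref{prop: RO-optimal-mech-ori} and \Cref{prop: RO-optimal-max}, with a joint-continuity plus strict-uniqueness argument on the compact simplex of admissible mechanisms $S := \{\Alpha : \alpha_1 \ge \cdots \ge \alpha_n \ge 0,\ \sum_i \alpha_i \le 1\}$. First I would recast the objective in quantile form: by Case 1 of \Cref{prop: RO-equilibrim-eq}, the symmetric NE CDF $F_\Alpha$ has quantile function $\Psi_\Alpha(t) := c^{-1}(h_n(\Alpha, t))$, so sampling $q_i \sim F_\Alpha$ iid is equivalent to sampling $U_i \sim \text{Unif}[0,1]$ iid and setting $q_i = \Psi_\Alpha(U_i)$. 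Hence
\[
\Phi_p(\Alpha) \;:=\; \mathbb{E}[\|\mathbf{q}\|_p] \;=\; \int_{[0,1]^n} \Bigl(\textstyle\sum_{i=1}^{n} \Psi_\Alpha(t_i)^p\Bigr)^{1/p} d\mathbf{t}, \quad \Phi_\infty(\Alpha) \;=\; \int_{[0,1]^n} \max_i \Psi_\Alpha(t_i)\, d\mathbf{t}.
\]
Because $c(1) < \infty$ makes $\Psi_\Alpha$ uniformly bounded, $\Psi_\Alpha$ is jointly continuous in $(\Alpha, t)$, and $\|\cdot\|_p \to \|\cdot\|_\infty$ uniformly on bounded sets, the map $(s, \Alpha) \mapsto \Phi_{1/s}(\Alpha)$ is jointly continuous on $[0,1] \times S$ under the reparameterization $s = 1/p$, with $s = 0$ encoding $p = \infty$.

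For part (i), \Cref{prop: RO-optimal-mech-ori} identifies $\Alpha_{FE} := (\tfrac{1}{n-1}, \ldots, \tfrac{1}{n-1}, 0)$ as the \emph{unique} maximizer of $\Phi_1$ on $S$. Uniqueness plus compactness gives a strict gap $\Phi_1(\Alpha_{FE}) - \Phi_1(\Alpha) \ge \delta(\eta) > 0$ for every $\Alpha \in S$ with $\|\Alpha - \Alpha_{FE}\| \ge \eta$; by uniform continuity of $\Phi_p$ in $p$ over $S$, this gap persists (with $\delta/2$) for $\Phi_p$ throughout some right-neighborhood of $p = 1$, forcing every $\Phi_p$-maximizer to lie inside the $\eta$-ball around $\Alpha_{FE}$. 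To upgrade ``close to $\Alpha_{FE}$'' into ``equal to $\Alpha_{FE}$'', I would shrink $\eta$ so that the one-sided directional derivatives of $\Phi_p$ along each of the finitely many feasible outward edges from the vertex $\Alpha_{FE}$ of $S$ remain strictly negative: at $p = 1$ they are strictly negative since $\Alpha_{FE}$ is the strict vertex maximizer of $\Phi_1$, and by differentiating under the integral in the quantile representation they depend continuously on $p$, hence stay strictly negative for $p \in [1, p_1]$ with some $p_1 > 1$.

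Part (ii) is obtained by running the same template at the other endpoint, with $\Alpha^{\text{T1}} := (1, 0, \ldots, 0)$ replacing $\Alpha_{FE}$ and \Cref{prop: RO-optimal-max} replacing \Cref{prop: RO-optimal-mech-ori}; in the chart $s = 1/p$ the limit $p \to \infty$ becomes a standard one-sided neighborhood at $s = 0$, and the non-smooth $\max$ is handled by noting that, under the continuous NE distribution, the argmax is almost-surely unique, so $\Phi_\infty$ is smooth in $\Alpha$. Since $\Alpha^{\text{T1}} \neq \Alpha_{FE}$ whenever $n \ge 3$, the two thresholds can be chosen with $p_1 < p_2$.

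The main obstacle is the local ``upgrade'' step. The compactness-based global argument only places any $\Phi_p$-maximizer inside a small ball around the target vertex; concluding that the maximizer actually equals the vertex requires ruling out nearby non-vertex points, and this is where the directional-derivative analysis at a vertex of $S$ enters. The regularity hypotheses $(-1/c')' < 1$ and $c(1) < \infty$ are what keep $\Psi_\Alpha$ and its $\Alpha$-gradient regular enough to justify interchanging differentiation and integration in $\Phi_p$ and to propagate the strict negativity of these directional derivatives from the endpoint value of $p$ to an adjacent interval.
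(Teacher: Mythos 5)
Your route is genuinely different from the paper's, and it contains a real gap precisely at the step you flag as the main obstacle. The combination you propose --- (a) a uniform-gap argument forcing every $\Phi_p$-maximizer into an $\eta$-ball around the target vertex, plus (b) strictly negative one-sided directional derivatives of $\Phi_p$ along the edges emanating from that vertex --- does not suffice. Step (b) is a statement about the derivative at the single point $\Alpha_{FE}$ (resp.\ the Top-$1$ vertex); it makes that vertex a strict \emph{local} maximizer, but it does not exclude another point inside the $\eta$-ball attaining a larger value. Since $\Phi_p$ is not concave in $\Alpha$ in general, ``the global maximizer lies within $\eta$ of the vertex'' together with ``the vertex is a strict local maximizer'' does not yield ``the global maximizer is the vertex''. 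Closing this requires first-order control that is uniform over the punctured ball (e.g.\ a positive lower bound, stable in $p$, on the derivative of $\Phi_p$ at every $\Alpha\neq\Alpha_{FE}$ in the direction $\Alpha_{FE}-\Alpha$), which is strictly more than continuity of edge derivatives at one point. A secondary mismatch: you optimize $\expect{\|\mathbf{q}\|_p}$, whereas the paper's proof optimizes $\int_0^1 c^{-1}(h_n(\Alpha,y))^p\,\mathsf{d}y=\expectsub{q\sim F}{q^p}$; these are different functionals, so the endpoint propositions cannot be invoked as black boxes for your objective without checking they transfer.

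The paper sidesteps localization entirely by making the endpoint arguments global on the simplex. Writing the objective as $\int_0^1 \phi_p(h_n(\Alpha,y))\,\mathsf{d}y$ with $\phi_p(z)=c^{-1}(z)^p$, it computes $\phi_p''(z)\propto c^{-1}(z)^{p-2}\left[(p-1)(c^{-1})'(z)+c^{-1}(z)(c^{-1})''(z)\right]$. For $p$ in a right-neighborhood of $1$ the bracket inherits the sign of $(c^{-1})''\le 0$, so $\phi_p$ is concave and the single-crossing transfer argument of \Cref{app-lem: RO-action-improve-mean} applies verbatim at \emph{every} admissible $\Alpha$: any transfer of reward from a higher rank to a lower rank strictly increases the objective, so the unique reward vector admitting no such improvement --- final elimination --- is optimal. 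For large $p$, the assumptions $c(1)<\infty$ and $(-1/c')'<1$ make the $(p-1)(c^{-1})'$ term dominate uniformly in $z$, $\phi_p$ becomes convex, and the reversed argument of \Cref{app-lem: RO-action-improve-max} forces Top-$1$. Because these pairwise-transfer inequalities hold at all points of the feasible set, no ``upgrade from near-optimal to optimal'' step is ever needed, and the thresholds $p_1,p_2$ come with an explicit criterion (the sign of the bracket) rather than from a compactness extraction. If you wish to retain your continuity framing, the cleanest repair is to import exactly this structure: show the sign of the relevant crossing quantity is constant over the whole simplex for $p$ near each endpoint, instead of controlling derivatives only at the candidate optimizer.
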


\Cref{prop: RO-opt-mech-Lp} validates the power of general Top-$k$ mechanisms by establishing that both Top-1 and Top-$(n-1)$ mechanisms achieve $L^p$-optimality for different ranges of $p$.

To identify the entry barriers under RO, we consider the scenario when a new creator with the same cost function $c$ tries to enter the market.

\begin{theorem}[Structural Barrier of RO Mechanism]\label{thm: structural-barrier-RO}
Suppose the following holds:
\begin{itemize}[leftmargin=0.2in]
\item All creators have the same cost function $c$, with $c(1) > 1$.
\item Under a RO mechanism for $n$ creators with rewards $(\alpha^n_1,\alpha^n_2,\ldots,\alpha^n_n)$, let $F_n$ be the CDF of the symmetric NE.
\item If a new creator joins the market, a new RO mechanism for $(n+1)$ creators with rewards $(\alpha^{n+1}_1,\alpha^{n+1}_2,\ldots,\alpha^{n+1}_n,\alpha^{n+1}_{n+1})$ is adopted, and it satisfies $\alpha^n_i\ge \alpha^{n+1}_i$ for $1\le i \le n$, and $\alpha^{n+1}_{n+1} =0$.
\end{itemize}
     Let $g_{n+1}(q)$ be the expected reward of the new creator under the new RO mechanism when she produces a content of quality level $q$, while the other $n$ creators produce contents of quality level according to $F_n$.%
    We have
    \[
    g_{n+1}(q) \leq c(q),~~~~~\forall q\in [0,1],
    \]
    where the inequality is strict when $F_n(q)\in (0,1)$. In other words, there is no way the new creator can enter the market making a positive expected profit upon the established NE of the incumbents.
\label{thm: structural-barrier-RO} 
\end{theorem}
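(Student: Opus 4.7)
The plan is to reduce the claim to a pointwise comparison $h_{n+1}(\Alpha^{n+1}, t) \le h_n(\Alpha^n, t)$ for $t \in [0,1]$, and then to prove that inequality by coupling two binomial rank variables and exploiting both the hypothesis $\alpha_j^{n+1} \le \alpha_j^n$ and the monotonicity of $\Alpha^{n+1}$. Since $\sum_i \alpha_i^n \le 1 < c(1)$, Proposition~\ref{prop: RO-equilibrim-eq} (Case~1) applies, so $F_n$ solves $h_n(\Alpha^n, F_n(q)) = c(q)$ on $[0, q_{\max}]$ with $q_{\max} = c^{-1}(\alpha_1^n)$, and $F_n$ is continuous and strictly increasing on that support because $h_n(\Alpha^n,\cdot)$ is strictly increasing whenever $\alpha_1^n > \alpha_n^n = 0$. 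The entrant with quality $q$ facing $n$ independent draws from $F_n$ receives expected reward
\[
g_{n+1}(q) = \sum_{i=0}^{n} \alpha_{i+1}^{n+1} \binom{n}{i}\, F_n(q)^{\,n-i}(1-F_n(q))^{\,i} = h_{n+1}(\Alpha^{n+1}, F_n(q)),
\]
so the theorem reduces to $h_{n+1}(\Alpha^{n+1}, t) \le h_n(\Alpha^n, t)$ for $t \in [0,1]$, strict on $(0,1)$.

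To obtain this inequality, I would couple the rank distributions. Let $L' \sim \mathrm{Binomial}(n-1, 1-t)$ denote the number of incumbents outranking a reference incumbent in the original $n$-creator game (her rank is $L'+1$, reward $\alpha_{L'+1}^n$), and let $B \sim \mathrm{Bernoulli}(1-t)$ be independent of $L'$. Then $L := L' + B \sim \mathrm{Binomial}(n, 1-t)$ coincides with the number of incumbents outranking the entrant in the $(n+1)$-creator game (her rank is $L+1$, reward $\alpha_{L+1}^{n+1}$). Conditioning on $L'$,
\begin{align*}
h_{n+1}(\Alpha^{n+1},t) - h_n(\Alpha^n,t)
&= \mathbb{E}\big[\alpha_{L+1}^{n+1}\big] - \mathbb{E}\big[\alpha_{L'+1}^n\big]\\
&= \mathbb{E}_{L'}\Big[\,t\big(\alpha_{L'+1}^{n+1} - \alpha_{L'+1}^n\big) + (1-t)\big(\alpha_{L'+2}^{n+1} - \alpha_{L'+1}^n\big)\Big].
\end{align*}
Each summand is $\le 0$: the first by the hypothesis $\alpha_j^{n+1} \le \alpha_j^n$, and the second by the chain $\alpha_{L'+2}^{n+1} \le \alpha_{L'+1}^{n+1} \le \alpha_{L'+1}^n$ (monotonicity of $\Alpha^{n+1}$ followed by the hypothesis). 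This already yields the non-strict bound $g_{n+1}(q) \le c(q)$.

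For the strict version when $t \in (0,1)$, I would induct on the smallest index $k$ for which the $L' = k$ summand is strictly negative. If the $L' = 0,1,\ldots,k-1$ summands were all zero, then since $t, 1-t > 0$ each vanishing pair forces $\alpha_{j+1}^{n+1} = \alpha_{j+1}^n$ and $\alpha_{j+2}^{n+1} = \alpha_{j+1}^n$ for $j = 0, \ldots, k-1$; interlacing these equalities with the monotonicity of $\Alpha^n$ collapses the chain to $\alpha_1^n = \cdots = \alpha_{k+1}^n = \alpha_1^{n+1} = \cdots = \alpha_{k+1}^{n+1}$. Were this collapse to persist for all $k$ up to $n-1$, we would obtain $\alpha_1^n = \cdots = \alpha_n^n = 0$ (using the WLOG $\alpha_n^n = 0$ from Proposition~\ref{prop: RO-equilibrim-eq}), hence $c(q) = h_n(\Alpha^n,t) = 0$, giving $q = 0$ and $t = 0$, contradicting $t \in (0,1)$. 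So some $L' = k$ summand with $k \le n-1$ is strictly negative, and its positive binomial weight $\binom{n-1}{k}(1-t)^k t^{n-1-k}$ transfers the strict inequality to the expectation.

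The main obstacle I foresee is the strictness bookkeeping in the final step: the non-strict bound drops out cleanly from the coupling, but ruling out the ``flat'' scenario where every summand vanishes requires carefully interleaving the componentwise hypothesis with the monotonicity of both $\Alpha^n$ and $\Alpha^{n+1}$ so that the chain of equalities is forced to cascade all the way down to the normalisation $\alpha_n^n = 0$.
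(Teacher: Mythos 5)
Your proposal is correct and follows essentially the same route as the paper: the paper's proof also observes that the entrant facing $n-1$ incumbents would earn exactly $c(q)$ by the equilibrium condition, and that adding the $n$-th incumbent can only drop her rank by at most one place while the rewards are componentwise dominated — which is precisely your coupling $L = L' + B$ made pathwise. Your explicit decomposition over $L'$ and the cascade-of-equalities argument for strictness are a more careful write-up of the same idea (the paper disposes of strictness in one informal sentence), so there is nothing to change.
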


\Cref{thm: structural-barrier-RO} applies if the RO mechanism in use is Top-$k$ whenever the number of creators is at least $k+1$.
However, the theorem does not apply if the RO mechanism in use is Top-$(n-1)$ whenever $n$ creators are in play.

We explain why the structural barrier helps prevent a collapse of overall content quality.
Suppose, to the contrary, that no such barrier exist for any number of incumbent creators.
Then the number of creators, denoted by $n$, can increase indefinitely.
However, since the total reward is bounded by $1$, each creator receives at most $\frac 1n$ reward in expectation.
To avoid loss, their expected cost is also bounded by $\frac 1n$, i.e, $\expectsub{q\sim F}{c(q)}\le \frac 1n$.
If the cost function $c$ is convex, then $c(\expectsub{q\sim F}{q}) \le \expectsub{q\sim F}{c(q)}\le \frac 1n$ by the Jensen's inequality,
which implies $\expectsub{q\sim F}{q} \le c^{-1}(1/n)$. Since $\lim_{n\ra \infty} c^{-1}(1/n) = 0$, the average content quality,
i.e., the $L^1$ metric $\expectsub{q\sim F}{q}$, would vanish in the limit.

\subsection{Proportional Mechanism} \label{sec: PM}
Under the proportional mechanism, we study the case that the creators' cost functions are heterogeneous. We show that, in the equilibrium, entry barrier only depends on a creator's ability to create low quality content, which is governed by each creator’s marginal cost at zero quality, i.e., $c'(0)$. Throughout this section, we assume the cost functions $c_i$ are convex.

\begin{restatable}{theorem}{NoMixEqProp}[\citet{SZIDAROVSZKY1997135}]
    Under the proportional mechanism, there exists a unique pure Nash equilibrium. 
    \label{proposition: prop-no-mix-eq}
\end{restatable}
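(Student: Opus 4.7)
The plan is to follow the classical share-function / aggregation technique of Szidarovszky--Yakowitz, which reduces the $n$-dimensional fixed-point problem defining a Nash equilibrium to a one-dimensional monotone equation in the aggregate quality $S := \sum_{j \in [n]} q_j$. The starting observation is that for fixed $\q_{-i}$ with $Q_{-i} := \sum_{j \neq i} q_j > 0$, direct computation gives
\[
\frac{\partial^2 r_i^{\text{PM}}}{\partial q_i^2}(q_i, \q_{-i}) \;=\; -\frac{2\, Q_{-i}}{(q_i + Q_{-i})^3} \;<\; 0,
\]
so $r_i^{\text{PM}}(\cdot, \q_{-i})$ is strictly concave in $q_i$ on $[0,1]$. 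Combined with convexity of $c_i$, the utility $u_i(\cdot, \q_{-i})$ is strictly concave, hence each player's best response is a singleton whenever $Q_{-i} > 0$.

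The next steps I would carry out are: (i) write the interior FOC using $Q_{-i} = S - q_i$, obtaining $c_i'(q_i) = (S - q_i)/S^2$, and observe that for every $S > 0$ the left side is non-decreasing in $q_i$ while the right side is strictly decreasing, so the FOC has at most one solution $q_i \in [0, S]$; set $q_i(S) := 0$ when $c_i'(0) \geq 1/S$ (boundary case). (ii) Define the share function $s_i(S) := q_i(S)/S$, which satisfies the implicit relation $S\, c_i'(s_i S) = 1 - s_i$. Differentiating implicitly yields
\[
s_i'(S) \;=\; -\frac{c_i'(q_i(S)) + s_i(S)\, S\, c_i''(q_i(S))}{1 + S^2\, c_i''(q_i(S))} \;\leq\; 0,
\]
using $c_i' \geq 0$ and $c_i'' \geq 0$ by convexity, with strict inequality wherever $q_i(S) > 0$ since $c_i$ is strictly increasing. (iii) Any pure NE must satisfy $\Phi(S^\star) := \sum_i s_i(S^\star) = 1$. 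Since $\Phi$ is strictly decreasing on the region where it is positive, there is at most one such $S^\star$. For existence, analyze the boundary: as $S \to 0^+$, the relation $S\, c_i'(s_i S) = 1 - s_i$ forces $s_i(S) \to 1$ (assuming finite $c_i'(0)$), giving $\Phi(S) \to n > 1$; for $S$ large enough that $1/S < c_i'(0)$ for all $i$, every player drops out and $\Phi(S) = 0$. Continuity and the intermediate value theorem then yield a unique $S^\star$, and the equilibrium profile $q_i^\star = s_i(S^\star)\, S^\star$ is unique.

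The main obstacle, and the reason a naive best-response fixed-point argument does not suffice, is the careful handling of the boundary and degenerate cases. Specifically, I would need to (a) rule out the pathological profile $\q = \mathbf{0}$ at which $r^{\text{PM}}$ is undefined, by verifying that any single player deviating to a small $q_i > 0$ strictly improves utility for small enough $q_i$ (so $\mathbf{0}$ is not an NE under any reasonable tie-breaking); (b) handle players with $q_i^\star = 0$ in equilibrium, where the FOC is replaced by the inequality $c_i'(0) \geq 1/S^\star$, and verify that the share function remains well-defined and monotone through the kink; and (c) confirm that $q_i^\star = s_i(S^\star) S^\star \leq 1$ so the interior solution stays inside the action space $[0,1]$, which follows by maximizing $q_i(S)$ via $s_i'(S) = 0$ and using the bound on $S^\star \leq n$. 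Once these are addressed, strict concavity (Step~1) certifies that the unique critical profile is indeed a Nash equilibrium, completing the proof.
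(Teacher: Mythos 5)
Your proposal is correct in substance but takes a genuinely different route from the paper. You give a self-contained proof of existence and uniqueness of the pure NE via the share-function/aggregation technique: reduce the equilibrium conditions to the one-dimensional equation $\Phi(S) = \sum_i s_i(S) = 1$ in the aggregate $S = \sum_j q_j$, show each $s_i$ is non-increasing (strictly where positive), and invoke the intermediate value theorem. This is essentially the classical Szidarovszky--Okuguchi argument that the paper simply cites for the statement. The paper's own appendix proof does something complementary: it first bounds the support of any equilibrium strategy using $c_i(1) > 1$, then shows that in any \emph{mixed} equilibrium the first-order condition in expectation, together with strict convexity of $c_i$, pins down a single optimal $q_i$, so every mixed equilibrium collapses to a pure one; existence and uniqueness of that pure profile are then imported from the citation. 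What each buys: your route makes the theorem self-contained and constructive (you get the equilibrium aggregate $S^\star$ explicitly), while the paper's route establishes the stronger fact--relevant because mixed NE is the solution concept used elsewhere in the paper--that no properly mixed equilibrium exists at all, which your pure-strategy argument does not address. Two small caveats on your sketch: your existence step ``$\Phi(S)=0$ for $S$ large'' needs $c_i'(0)>0$ for every $i$; when some $c_i'(0)=0$ (a case the paper explicitly allows in Proposition~\ref{proposition: NE-exist-PM}) you must instead argue directly that $\Phi$ drops below $1$ on the feasible range, and since $q_i \in [0,1]$ forces $S \le n$, the limit $S \to \infty$ is not literally available and the boundary analysis should be carried out on $(0,n]$.
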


We would like to classify the participating creators who are really ``contributing'',  producing items with strictly positive quality instead of zero quality items (which is equivalent to non-producing). We formally define this below.
\begin{definition}
    In the Nash equilibrium under proportional mechanism $(q_i^\star)_{i\in \itemSet}$, the creator $i$ is said to be contributing if $q_i^\star > 0$. Moreover, the equilibrium is called a contributing equilibrium if every creator is contributing.
\end{definition}
Given a NE with asymmetric cost functions, there can be some creators who are contributing but the others are not. This corresponds to the fact that the contributing ones are establishing an entry barrier against the others, such that the other creators are not motivated to contribute. To better see this, denote the set of contributing creators as $\mathcal{C}$, we note that, for any $k \in \itemSet \setminus \mathcal{C}$,
\begin{equation}
(\uPO_k)'(0, \q_{-k}^\star) = \frac{1}{\sum_{i\in \mathcal{C}} q_i^\star } - c_k'(0) < 0. 
\label{eq: barrier-to-entry}
\end{equation}
From the above equation, we can see that the non-contributing creators cannot enter the market since $c_k'(0)$ is too large such that it exceeds the ``barrier'' $\frac{1}{\sum_{k\in \mathcal{C}} q_k^\star }$ established by the contributing ones.

\paragraph{Example: A simple market with 3 creators} Consider a market with 3 creators where $c_1(q) = c_2(q) = \frac{1}{2}q^2$ and $c_3(q) = \frac{1}{2}q^2 + 4q$. We first consider the Nash equilibrium when the market only contains creators $1$ and $2$. The equation system to solve the Nash equilibrium $(q_1^\star, q_2^\star)$ is 
\negspace
\begin{align*}
    \frac{q_1^\star}{(q_1^\star + q_2^\star)^2} - q_2^\star = \frac{q_2^\star}{(q_1^\star + q_2^\star)^2} - q_1^\star = 0.
\end{align*}
\negspace
By solving the above, we have $q_1^\star = q_2^\star = \frac{1}{2}$. However, one could also check that, with all 3 creators, the equilibrium is $(\frac{1}{2}, \frac{1}{2}, 0)$, which means that the creator $3$ is not contributing. Consider the derivative of utility for creator $3$ at the equilibrium,
\negspace
\[
 (\uPO)'(q_3, \q_{-3}^\star) = \frac{1}{(q_3 + 1)^2} - c_3'(q_3).
\]
One could see that this is decreasing in $q_3$, and even when $q_3 =0$, the value is negative. This means that the optimal strategy for creator $3$ is $q_3^\star = 0$. Indeed, in this case, creator $1$ and $2$ are creating an entry barrier such that creator $3$ could not enter the market. 

We start with the following characterisation of the contributing equilibrium. 
\begin{restatable}{proposition}{PMExist}
    The Nash equilibrium under the proportional mechanism is a contributing equilibrium if
    \negspace
    \[
     \frac{c_j'(0)}{\sum_{i\in \itemSet} c_i'(0)} < \frac{1}{n-1},~~~\forall j \in \itemSet,
    \]
    \negspace
    or $c_j'(0) = 0$ for all $j\in \itemSet$. Moreover, at the contributing equilibrium, it holds that
    \negspace
    \[
    \frac{c_j'(q_j^\star)}{\sum_{i \in \itemSet} c_i'(q_i^\star)} < \frac{1}{n-1},~~~\forall j \in \itemSet.
    \]
    \negspace
    \label{proposition: NE-exist-PM}
\end{restatable}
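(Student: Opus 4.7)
The plan is to work directly with the first-order conditions (FOCs) of the unique pure Nash equilibrium guaranteed by \Cref{proposition: prop-no-mix-eq}, distinguishing contributors from non-contributors, and then compare the implied bounds on $c_k'(0)$ with the hypothesized inequalities.

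First I would write out the FOCs. Let $S = \sum_{i\in[n]} q_i^\star$. For a contributor $i\in\mathcal{C}$ (i.e., $q_i^\star>0$) the interior optimality condition gives
\[
c_i'(q_i^\star) \;=\; \frac{S - q_i^\star}{S^2},
\]
while for a non-contributor $k\in\mathcal{N}=[n]\setminus\mathcal{C}$, the boundary optimality condition $\partial u_k/\partial q_k|_{q_k=0}\le 0$ gives $c_k'(0)\ge 1/S$. Summing the contributor FOCs over $i\in\mathcal{C}$ yields the key identity
\[
\sum_{i\in\mathcal{C}} c_i'(q_i^\star) \;=\; \frac{|\mathcal{C}|-1}{S},
\]
so that $1/S = \bigl(\sum_{i\in\mathcal{C}} c_i'(q_i^\star)\bigr)/(|\mathcal{C}|-1)$. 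I would briefly observe that $|\mathcal{C}|\ge 2$ at any NE (if $|\mathcal{C}|\le 1$, a non-contributor's FOC at $0$ is $1/S>0$ once anyone else is contributing, so someone would want to deviate; the case $|\mathcal{C}|=0$ is ruled out similarly since deviating to small positive quality yields utility close to $1$).

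Next I would prove the contributing-equilibrium claim by contradiction. Assume $\mathcal{N}\neq\emptyset$ and pick $k^\star\in\mathcal{N}$ maximizing $c_k'(0)$ among non-contributors. Combining the non-contributor inequality with the identity above and the convexity of $c_i$ (so $c_i'(q_i^\star)\ge c_i'(0)$) gives
\[
c_{k^\star}'(0) \;\ge\; \frac{1}{S} \;\ge\; \frac{\sum_{i\in\mathcal{C}} c_i'(0)}{|\mathcal{C}|-1}.
\]
Rearranging, $(|\mathcal{C}|-1)\,c_{k^\star}'(0)\ge \sum_{i\in\mathcal{C}} c_i'(0)$. Since $c_j'(0)\le c_{k^\star}'(0)$ for every $j\in\mathcal{N}$, we also have $\sum_{j\in\mathcal{N}} c_j'(0)\le |\mathcal{N}|\,c_{k^\star}'(0)$. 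Adding,
\[
\sum_{i\in[n]} c_i'(0) \;\le\; (|\mathcal{C}|-1+|\mathcal{N}|)\,c_{k^\star}'(0) \;=\; (n-1)\,c_{k^\star}'(0),
\]
which contradicts the hypothesis $c_{k^\star}'(0)/\sum_i c_i'(0) < 1/(n-1)$. For the alternative hypothesis $c_j'(0)=0$ for all $j$, the non-contributor condition $c_k'(0)\ge 1/S$ fails immediately since $S>0$, so $\mathcal{N}=\emptyset$.

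Finally, for the second assertion I would start from the contributing-NE FOC $c_j'(q_j^\star)=(S-q_j^\star)/S^2$ and rearrange to $q_j^\star = S - S^2 c_j'(q_j^\star)$. Since $q_j^\star>0$, this yields $S\,c_j'(q_j^\star)<1$. Using the summed identity $S = (n-1)/\sum_{i\in[n]} c_i'(q_i^\star)$ (now with $\mathcal{C}=[n]$) gives exactly
\[
\frac{c_j'(q_j^\star)}{\sum_{i\in[n]} c_i'(q_i^\star)} \;<\; \frac{1}{n-1},
\]
as required. The main obstacle I anticipate is not any single calculation but rather the bookkeeping: choosing the right $k^\star$ (the maximum among non-contributors) so that the weighted averaging collapses the $|\mathcal{C}|-1$ denominator into the desired $n-1$, and verifying the edge cases $|\mathcal{C}|\in\{0,1\}$ and the all-zero-marginal-cost case cleanly.
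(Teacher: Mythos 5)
Your proof is correct, but it follows a genuinely different route from the paper's. The paper transforms variables to $\beta=\sum_i q_i$ and $x_i=q_i/\beta$, rewrites the first-order conditions as \cref{eq: prop-eq-char-no-scale-line1}--\cref{eq: prop-eq-char-no-scale-line2}, defines the implicit root map $G(\beta)$ of the first system, proves $G$ is continuous via a local Lipschitz estimate, and then applies the intermediate value theorem to $\Vert G(\beta)\Vert_1$ between $\beta=0$ (where the norm is $n$) and $\beta=(n-1)/\sum_i c_i'(0)$ (where it is below $1$) to \emph{construct} a nonnegative solution of the full contributing FOC system; uniqueness from \Cref{proposition: prop-no-mix-eq} then identifies it with the NE. You instead take the unique NE as given and run a direct KKT argument: the boundary condition $c_{k}'(0)\ge 1/S$ for a non-contributor, the summed interior FOCs $\sum_{i\in\mathcal{C}}c_i'(q_i^\star)=(|\mathcal{C}|-1)/S$, and convexity combine into $(n-1)\,c_{k^\star}'(0)\ge\sum_i c_i'(0)$, contradicting the hypothesis. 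Your approach is more elementary (it avoids the continuity lemma, which is the bulk of the paper's proof) and it delivers the ``Moreover'' clause explicitly from $S\,c_j'(q_j^\star)<1$ --- a step the paper's written proof leaves implicit. What the paper's construction buys in exchange is the $(\beta,\mathbf{x})$ machinery, which is reused in the proofs of \Cref{prop: contributing-subset} and \Cref{cor: entry-barrier-homogeneous}, and an argument that does not need the degenerate cases $|\mathcal{C}|\le 1$ to be excluded separately. Two small points to tighten: your parenthetical for ruling out $|\mathcal{C}|=1$ is slightly off (the cleanest reason is that a lone contributor's interior FOC forces $c_1'(q_1^\star)=0$, impossible for a strictly increasing convex cost at $q_1^\star>0$; in fact your main averaging argument already covers $|\mathcal{C}|=1$ since it only yields $0\ge c_1'(0)=0$ there); and you should state which convention defines the reward at the all-zero profile before arguing $|\mathcal{C}|=0$ is not an equilibrium.
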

\negspace
\negspace
The sufficient condition for contributing equilibrium requires the cost functions are not ``too different'' at the origin. In particular, we note that 2 creators or homogeneous creators of an arbitrary amount can always facilitate a contributing equilibrium. For the necessary condition, when $n$ increases, one can observe that it is also stricter according to the pigeonhole principle, which characterises the hardness for plenty of heterogeneous creators to coexist under such a reward mechanism.

Beyond the equilibria where all the creators are contributing, given an arbitrary NE, we can explicitly point out the contributing creators in the general equilibrium, which is given in the following theorem.
\begin{restatable}{theorem}{ContributingSubset}
    Suppose $c_1'(0) \leq \ldots \leq c_n'(0)$, for a Nash equilibrium $(q_i^\star)_{i\in \itemSet}$, the contributing creators in the equilibrium are $\{1, \ldots, k\}$ where $k$ is the maximal index in the sense that \cref{eq: barrier-to-entry} is satisfied for index $k+1$ onwards.\label{prop: contributing-subset}
\end{restatable}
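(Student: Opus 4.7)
The plan is to exploit the first-order optimality conditions at the pure NE (guaranteed by \Cref{proposition: prop-no-mix-eq}) to derive a clean separation between the values of $c_i'(0)$ for contributing versus non-contributing creators. Let $\mathcal{C}$ denote the set of contributing creators at the given NE, and set $S := \sum_{\ell \in \mathcal{C}} q_\ell^\star = \sum_{\ell \in [n]} q_\ell^\star$, where the second equality uses $q_j^\star = 0$ for $j \notin \mathcal{C}$. Note that $\mathcal{C}$ is non-empty: otherwise every creator plays $0$, and creator $1$ could profitably deviate to a tiny $q > 0$ (reward $\approx 1$ versus marginal cost $c_1'(0)\cdot q$).

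For a contributing creator $i \in \mathcal{C}$, the interior first-order condition at $q_i^\star > 0$ yields
\[
c_i'(q_i^\star) \;=\; \frac{S - q_i^\star}{S^2} \;<\; \frac{1}{S}.
\]
Convexity of $c_i$ then gives $c_i'(0) \le c_i'(q_i^\star) < 1/S$. For a non-contributing creator $j \notin \mathcal{C}$, the PM reward $q_j/(q_j + S)$ is strictly concave in $q_j$ and $c_j$ is convex, so the utility is concave in $q_j$; optimality of $q_j^\star = 0$ then forces the right derivative at $0$ to be non-positive, which rearranges to $c_j'(0) \ge 1/S$. This is precisely the content of \cref{eq: barrier-to-entry}.

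Combining the two bounds yields $c_i'(0) < 1/S \le c_j'(0)$ for every $i \in \mathcal{C}$ and $j \notin \mathcal{C}$. Under the assumed sorting $c_1'(0) \le \cdots \le c_n'(0)$, this forces $\mathcal{C}$ to be a prefix $\{1,\ldots,k\}$: if some $j \notin \mathcal{C}$ satisfied $j < i$ for some $i \in \mathcal{C}$, one would get $c_j'(0) \le c_i'(0) < c_j'(0)$, a contradiction. The non-contributor bound then gives \cref{eq: barrier-to-entry} for every index $j \ge k+1$, and $k$ is maximal with this property since all larger indices lie outside $\mathcal{C}$.

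The main technical care needed is transferring the interior FOC bound at $q_i^\star$ back to a bound at $0$ via convexity, together with the correct handling of the boundary KKT condition for non-contributors. Ties in $c_i'(0)$ pose no issue: the \emph{strict} inequality $c_i'(0) < 1/S$ precludes any non-contributor sharing the same marginal-cost-at-zero with a contributor, so the prefix structure $\{1,\ldots,k\}$ is unambiguous.
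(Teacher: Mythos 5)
Your core argument is correct and is essentially the first half of the paper's own proof: both derive the sandwich $\max_{i\in\mathcal{C}} c_i'(0) \le 1/S \le \min_{j\notin\mathcal{C}} c_j'(0)$ from first-order conditions and read off the prefix structure from the sorting of the $c_i'(0)$. Your route to the contributor-side bound (interior FOC at $q_i^\star$ giving $c_i'(q_i^\star) = (S-q_i^\star)/S^2 < 1/S$, then convexity to pull this back to $c_i'(0)$) is actually cleaner than the paper's, which evaluates the derivative at $q_i=0$ and silently replaces $1/(S-q_i^\star)$ by $1/S$; your version also yields the strict inequality that rules out a contributor and a non-contributor sharing the same $c'(0)$. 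Two caveats. First, the boundary KKT condition only gives $c_j'(0) \ge 1/S$ for non-contributors, i.e.\ a weak version of \cref{eq: barrier-to-entry}; the strict inequality in that display is not forced (the paper has the same slack). Second, and more substantively, you dispatch the ``maximality of $k$'' clause as a tautology (``all larger indices lie outside $\mathcal{C}$''), whereas the paper devotes the entire second half of its proof to it: using the reparametrization $\beta^\star = \sum_i q_i^\star$ and the monotonicity $\beta^\star(k+1)\ge\beta^\star(k)$ across restricted games, it shows that the equilibrium's contributing set cannot be a shorter prefix $\{1,\dots,k'\}$ when $\{1,\dots,k'+1\}$ also supports a contributing equilibrium. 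If ``maximal'' is read purely in terms of the equilibrium quantities appearing in \cref{eq: barrier-to-entry}, your argument suffices, since the barrier inequality fails strictly at index $k$ and so $k$ is pinned down; but if it is read as the intrinsic ``largest prefix admitting a contributing restricted equilibrium'' characterization that the paper actually proves, that comparison argument is absent from your proposal and would need to be supplied.
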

According to \Cref{prop: contributing-subset}, the participation of the creators depends on the cost of production when the item quality is around 0. Specifically, since $c_i(\epsilon) \approx \epsilon \cdot c_i'(0)$ when $\epsilon$ is small, if the creator is able to ``easily'' produce a low-quality item, then the creator is at least able to enter the market and make a (possibly small) profit. %
Further, for a market facilitated by homogeneous creators, when $n \rightarrow \infty$, the barrier to entry established by those creators will converge to $c_i'(0)$. 
\begin{restatable}{corollary}{BarrierHomo} \label{cor: entry-barrier-homogeneous} For a market facilitated by the set of $n$ homogeneous creators $\itemSet_h(n)$ with cost functions $\bar{c}$. For any creator $k$ with $c_k'(0) > \bar{c}'(0) > 0$, there always exists sufficiently large $n$ such that the contributing creators are exactly $\itemSet_h(n)$ in the equilibrium facilitated by $\itemSet_h(n) \cup \{k\}$.   
\end{restatable}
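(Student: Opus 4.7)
The plan is to exploit \Cref{prop: contributing-subset} by constructing an explicit candidate equilibrium in which creator $k$ plays $q_k=0$ while each homogeneous creator plays the symmetric value $q^\star$ from the auxiliary market that excludes $k$, and then showing this candidate is indeed a Nash equilibrium for sufficiently large $n$.

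First, I would analyze the sub-market consisting solely of $\itemSet_h(n)$. Because the creators are homogeneous, $c_j'(0)/\sum_i c_i'(0) = 1/n < 1/(n-1)$, so \Cref{proposition: NE-exist-PM} guarantees a contributing equilibrium, and by symmetry together with the uniqueness asserted in \Cref{proposition: prop-no-mix-eq} this equilibrium is the symmetric profile $(q^\star,\ldots,q^\star)$ satisfying the first-order condition
\[
\bar{c}'(q^\star) \;=\; \frac{n-1}{n^2\,q^\star}.
\]

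Second, I would establish the limit $n q^\star \to 1/\bar{c}'(0)$ as $n\to\infty$. Convexity of $\bar{c}$ gives $\bar{c}'(q^\star)\ge\bar{c}'(0)>0$, so $q^\star \le (n-1)/(n^2\bar{c}'(0)) \to 0$; continuity of $\bar{c}'$ then yields $\bar{c}'(q^\star)\to\bar{c}'(0)$, and rearranging the first-order condition gives $n q^\star = (n-1)/(n\bar{c}'(q^\star)) \to 1/\bar{c}'(0)$. Consequently $1/(n q^\star) \to \bar{c}'(0) < c_k'(0)$, so there exists some threshold $N$ such that $1/(n q^\star) < c_k'(0)$ for every $n\ge N$.

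Third, I would verify that the profile $(q^\star,\ldots,q^\star,0)$ is a Nash equilibrium of the augmented market $\itemSet_h(n)\cup\{k\}$. For each homogeneous creator, since $q_k=0$ contributes nothing to the denominator $\sum_j q_j$, the best-response problem is identical to the one in the sub-market, so $q^\star$ remains optimal. For creator $k$, the utility $u_k(q_k,\q_{-k}) = q_k/(q_k + n q^\star) - c_k(q_k)$ is concave in $q_k$ (the proportional share is concave in one's own quantity, and $-c_k$ is concave by convexity of $c_k$), and its derivative at $q_k=0$ equals $1/(n q^\star) - c_k'(0) < 0$ by step two. Hence $q_k=0$ is the unique best response. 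Appealing once more to the uniqueness of the PM equilibrium (\Cref{proposition: prop-no-mix-eq}), the candidate coincides with the NE of the augmented market, so the set of contributing creators is exactly $\itemSet_h(n)$.

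The most delicate step is the asymptotic analysis in step two; the argument hinges on the strict positivity $\bar{c}'(0) > 0$, without which the ``effective barrier'' $1/(n q^\star)$ might fail to converge to a value strictly less than $c_k'(0)$, and the comparison used to block entry would break down.
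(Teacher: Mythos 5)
Your proof is correct and follows essentially the same route as the paper's: both arguments reduce to showing that the entry barrier $1/\sum_{i} q_i^\star = 1/(n q^\star)$ of the homogeneous market converges to $\bar{c}'(0)$ as $n \to \infty$ (the paper via the change of variables $\beta^\star = \sum_i q_i^\star$, $x_i^\star = q_i^\star/\beta^\star$, you via the symmetric first-order condition directly), and then invoke the barrier condition to exclude creator $k$ once $1/(nq^\star) < c_k'(0)$. Your third step, explicitly verifying that $(q^\star,\ldots,q^\star,0)$ is the Nash equilibrium of the augmented market, is a welcome bit of extra rigor that the paper leaves implicit.
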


\section{Strategic Barriers}\label{sec: strategic-barriers}

In this section, we examine how the platform can impose entry fees and use them to modify the reward mechanism, so as to improve content quality at NE. The entry fee serves as a strategic entry barrier, deterring players who aim to create low-quality contents to secure minimal profits from entering the competition. Then the platform reallocates the entry fees to increase the rewards.

\subsection{Strategic Barriers under Rank-Order Mechanism}

We will define Entry Fee Reallocation Mechanism (EFRM) formally below, but before doing so, we discuss some natural constraints on how we can modify the reward mechanism.
Suppose the original RO mechanism is $\Alpha = (\alpha_1,\alpha_2,\ldots,\alpha_n)$.
Recall that the rewards can be driven by the platform as well as advertisers and sponsors, so the platform might not enjoy full flexibility to adjust the rewards.
To afford adjustment to the rewards, we propose the platform to charge an entry fee $\xi$ from each creator.
The entry fees collected will be used to modify the RO mechanism to $\overline{\Alpha} = (\overline{\alpha}_1,\overline{\alpha}_2,\ldots,\overline{\alpha}_n)$.
To ensure that no creator incurs a loss, the minimal reward in the modified mechanism should be at least the entry fee charged, i.e., $\overline{\alpha}_n \ge \xi$.
Also, the reward at each rank should not drop, i.e., $\overline{\alpha}_i \ge \alpha_i$ for $1\le i\le n$.
Finally, from the platform's perspective, it is preferable that the increase in total reward not exceeding the total entry fee received, i.e., $\sum_{i=1}^n (\overline{\alpha}_i - \alpha_i) \le n\xi$.

\begin{definition}[Entry Fee Reallocation Mechanism (EFRM)]
Under a ``default'' RO mechanism with rewards $\Alpha$, the platform charges each creator an entry fee of $\xi$.
After collecting the total entry fee of $n\xi$, the platform reallocates it back to the reward pool, forming a new RO mechanism with rewards $\{\overline{\alpha_i}\}_{1\leq i\leq n}$, which satisfy the constraints $\overline{\alpha}_i \geq \max \{\xi,\alpha_i\}$ and $\sum_{i=1}^{n} (\overline{\alpha}_i - \alpha_i) \le n\xi$.
\label{def: EFRM}
\end{definition}

The example below demonstrates how an EFRM can improve the $L^\infty$ metric.

\paragraph{Example: EFRM with 3 creators} Suppose $n=3$, and the original mechanism is $(\alpha_1,\alpha_2,\alpha_3) = (\frac 12,\frac 12,0)$. After an entry fee $\xi$ is charged from each creator,
we adopt EFRM $(\overline{\alpha}_1,\overline{\alpha}_2,\overline{\alpha}_3) = (\frac 12 + \xi + t, \frac 12 + \xi - t, \xi)$, where $0\le t\le \min \{ \frac 12, \xi\}$. The constraint on $t$ is to ensure that all constraints of EFRM are satisfied. After cancelling the effect of entry fees, the EFRM is equivalent to $(\frac 12 + t,\frac 12 - t, 0)$ with no entry fees involved.

Suppose the 3 creators share the same cost function $c(q) = q$. By \Cref{prop: RO-optimal-max}, we can set $\xi = t = \frac 12$ to achieve the $L^\infty$-optimal mechanism. Next, we calculate the improvement at the NE characterised at \Cref{prop: RO-equilibrim-eq}.
For the original mechanism, the NE is attained with $F(q) = 1 - \sqrt{1-2q}$ for $0\le q\le \frac 12$.
For the new mechanism, the NE is attained with $\tilde{F}(q) = \sqrt{q}$ for $0\le q \le \frac 12$.
The $L^\infty$ metrics are respectively $\int_0^{1/2} 1-F(q)^3\,\mathsf{d}q = 0.45$ and $\int_0^1 1-\tilde{F}(q)^3\,\mathsf{d}q = 0.6$.

The following lemma confirm that the EFRM is always beneficial for both benchmarks.
\begin{lemma}
  There always exists an EFRM with $\xi>0$ such that the resulting $L^1$ and $L^\infty$ metrics are not decreased compared to no EFRM. \label{lem: EFRM-no-harm}
\end{lemma}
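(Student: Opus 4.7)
The plan is to exhibit a concrete EFRM, the \emph{identity reallocation}, and observe that it leaves the strategic game invariant. I would pick any $\xi>0$ and set $\overline{\alpha}_i := \alpha_i + \xi$ for every $1\le i\le n$, then verify the three constraints in \Cref{def: EFRM}. Non-negativity and the weakly-decreasing order of $\overline{\Alpha} = (\overline{\alpha}_1,\ldots,\overline{\alpha}_n)$ are inherited from $\Alpha$; the floor condition $\overline{\alpha}_i\ge \max\{\xi,\alpha_i\}$ follows from $\alpha_i\ge 0$; and the budget constraint holds with equality, $\sum_{i=1}^n(\overline{\alpha}_i-\alpha_i) = n\xi$.

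The crux is a pointwise identity of utilities: under this EFRM, the net payoff of a creator occupying rank $\sigma^{-1}(i)$ equals
\[
\overline{\alpha}_{\sigma^{-1}(i)} - c(q_i) - \xi \;=\; (\alpha_{\sigma^{-1}(i)} + \xi) - c(q_i) - \xi \;=\; \alpha_{\sigma^{-1}(i)} - c(q_i),
\]
which is exactly the payoff in the original RO game without any entry fees. Since the two games have identical utility functions, they share the same best-response correspondence and therefore the same symmetric mixed Nash equilibrium CDF $F$ as characterised in \Cref{prop: RO-equilibrim-eq}, regardless of which of the three cases applies. Because both $\mathbb{E}_{q\sim F}[q]$ and $\mathbb{E}_{\mathbf{q}\sim F^{\otimes n}}[\max_i q_i]$ are functionals of $F$ alone, each takes the same value under the EFRM as under the baseline mechanism, which certainly satisfies ``not decreased''.

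I do not anticipate a technical obstacle here: the lemma merely asserts that EFRM cannot be strictly harmful, and the identity reallocation witnesses this by leaving the induced game strategically unchanged. The more substantive statements, that carefully chosen reallocations \emph{strictly} improve the $L^1$ or $L^\infty$ benchmark (as hinted at by the 3-creator example, where $\xi=t=\tfrac 12$ raises the $L^\infty$ metric from $0.45$ to $0.6$), are the content of the Max-Min and Max-Max Reallocation results advertised in the introduction rather than of this warm-up lemma.
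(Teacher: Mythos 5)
Your proof is correct and is essentially the paper's argument: both exhibit the trivial ``charge $\xi$ and refund it uniformly'' reallocation $\overline{\alpha}_i = \alpha_i + \xi$ and conclude that the induced game, and hence the equilibrium CDF $F$ and both metrics, are unchanged. The only cosmetic difference is that you verify invariance via the pointwise utility identity, whereas the paper re-derives the symmetric NE under a floor reward $\alpha_n \ge c(0)$ (its \Cref{prop:symmetric-char-low-cost}) and observes it coincides with the original one.
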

The key component of EFRM is how to reallocate the entry fees to maximise the $L^1$ and $L^\infty$ metrics. We introduce two different reallocation schemes, Max-Min (\Cref{alg: Max-min reallocation}) and Max-Max (\Cref{alg: Max-max reallocation}). Both reallocation schemes start with remedying the costs of the entry fees. Then, the Max-Min reallocation scheme is to subsidise from the reward of the bottom rank $\alpha_n$ to the top rank $\alpha_1$. Given the constraints stated in \Cref{def: EFRM}, Max-Min is the scheme that maximises the minimum reward $\alpha_n$ given the total charged entry fees. The Max-Max scheme is to allocate all the remaining entry fees to the reward of the top-1 creator $\alpha_1$. 
\begin{algorithm}[H]
    \caption{Max-Min Reallocation} \label{alg: Max-min reallocation}
    \begin{algorithmic}[1]
      \REQUIRE Available entry fees $R$, the initial rewards $\alpha_1, \cdots, \alpha_n$.
      \STATE Use $R$ to remedy all the rewards to $\alpha_i \leftarrow \max\{\xi, \alpha_i\}$. Update $R$.
      \FOR{$i = 1,\cdots, n-2 $}
        \IF{$R > i\cdot(\alpha_{n-i-1}- \alpha_{n-i})$}
          \STATE Allocate the entry fees such that all the rewards below the place $n-i$ matches $\alpha_{n-i-1}$. $\alpha_j \leftarrow \alpha_{n-i-1}$ for all $j \geq n-i$. 
          \STATE Update the remaining entry fees. $R \leftarrow R - i\cdot(\alpha_{n-i-1} - \alpha_{n-i})$.
        \ELSE
          \STATE Allocate all the remaining entry fees in this range: $\alpha_j \leftarrow \alpha_j + R/i$ for all $j \geq n-i$.
        \ENDIF
      \ENDFOR
    \end{algorithmic}
\end{algorithm}

\begin{algorithm}[H]
    \caption{Max-Max Reallocation} \label{alg: Max-max reallocation}
    \begin{algorithmic}[1]
      \REQUIRE Available entry fees $n\xi$, the initial rewards $\alpha_1, \cdots, \alpha_n$.
      \STATE Use $R$ to remedy all the rewards to $\alpha_i \leftarrow \max\{\xi, \alpha_i\}$. Update $R$.
      \STATE Reallocate all the rewards $\alpha_1 \leftarrow \alpha_1 + R$.
    \end{algorithmic}
\end{algorithm}

We show that, under mild conditions, the Max-Min and Max-Max schemes are the optimal reallocation strategies to maximise $L^1$ and $L^\infty$ metrics, respectively.
\begin{theorem} \label{thm: optimal-subsidy-scheme}
Under RO, to implement an EFRM with entry fee $\xi$,
\begin{itemize}[leftmargin=0.2in]
    \item It is optimal to use the Max-Min scheme to reallocate the entry fee to maximise the $L^1$ metric.
    \item  Assume the cost function satisfies the fact that $c''(q) \leq c'(q)^2$, it is optimal to use the Max-Max scheme to reallocate the entry fee to maximise the $L^\infty$ metric.
\end{itemize}
\end{theorem}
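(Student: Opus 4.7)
The plan is to reduce the EFRM to an equivalent standard RO mechanism and then invoke constrained versions of Propositions \ref{prop: RO-optimal-mech-ori} and \ref{prop: RO-optimal-max}. Since each creator pays $\xi$ upfront and receives $\overline{\alpha}_{\text{rank}}$, the net payoff at quality $q$ is $\overline{\alpha}_{\text{rank}} - \xi - c(q)$, which is equivalent to an RO mechanism with effective rewards $\beta_i := \overline{\alpha}_i - \xi$ and no entry fee. The EFRM constraints translate to: $\beta$ is non-increasing, $\beta_i \geq \max\{0, \alpha_i - \xi\}$, and $\sum_i \beta_i \leq \sum_i \alpha_i$. The task reduces to showing that Max-Min (resp.\ Max-Max) produces the feasible $\beta$ that maximizes the $L^1$ (resp.\ $L^\infty$) metric of the resulting NE, which by the assumption $c(1) > 1 \geq \overline{\alpha}_1 > \beta_1$ lives in Case~1 of Theorem \ref{prop: RO-equilibrim-eq}.

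For the $L^1$ part, I would use Case~1 of Theorem \ref{prop: RO-equilibrim-eq} and the substitution $t = F(q)$ to obtain $L^1(\beta) = \int_0^1 c^{-1}(h_n(\beta, t))\, dt$. Since $c$ is convex, $c^{-1}$ is concave, and $h_n$ is linear in $\beta$, so $L^1$ is concave in $\beta$. By the argument underlying Proposition \ref{prop: RO-optimal-mech-ori}, transferring reward $\delta$ from a higher rank $j$ to a lower rank $k > j$ strictly increases $L^1$ whenever $\beta_j > \beta_k$: the difference $B_{k-1}(t) - B_{j-1}(t)$ of Bernstein basis weights is positive on small $t$ and negative on large $t$, while the multiplier $(c^{-1})'(h_n(\beta,t))$ is decreasing in $t$ since $h_n(\beta,\cdot)$ is non-decreasing, so by a Chebyshev-type inequality the perturbation integral is positive. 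Applying this transfer argument under the floor constraints, any $L^1$-maximizer must saturate floors at top ranks and equalize at bottom ranks subject to monotonicity---which is exactly what Max-Min produces by iteratively lifting the bottom tiers. A swap-exchange argument then certifies optimality: if some $\beta^\star$ beats Max-Min, then by a structural comparison there exist ranks $j < k$ with $\beta^\star_j > \beta^\star_k$ for which a small transfer is feasible, yielding a strict improvement and a contradiction.

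For the $L^\infty$ part, a similar derivation gives $L^\infty(\beta) = n \int_0^1 t^{n-1} c^{-1}(h_n(\beta, t))\, dt$. The weight $t^{n-1}$ concentrates mass near $t = 1$, where $h_n(\beta, 1) = \beta_1$ is dominant. Under the assumption $c''(q) \leq c'(q)^2$, the argument behind Proposition \ref{prop: RO-optimal-max} shows that transferring reward from any $\beta_k$ with $k \geq 2$ up to $\beta_1$ strictly increases $L^\infty$; the bound on $c''/(c')^2$ is precisely what is needed to offset the fact that the $L^\infty$ transfer integrand could a priori have mixed signs. Hence the optimum saturates the floor at every rank $k \geq 2$ and routes the remaining budget into $\beta_1$, matching the Max-Max output exactly.

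The main obstacle is bookkeeping around the floor constraints $\beta_i \geq \max\{0, \alpha_i - \xi\}$ together with the monotonicity $\beta_1 \geq \cdots \geq \beta_n$. Unlike the unconstrained settings of Propositions \ref{prop: RO-optimal-mech-ori} and \ref{prop: RO-optimal-max}, these constraints may prevent reaching pure final-elimination or Top-1 structures: for $L^1$, some high-rank rewards can be ``stuck'' above the would-be flat value, and Max-Min must equalize only from the bottom and halt whenever a floor is binding. Verifying that each outer iteration of Max-Min is feasible and locally $L^1$-improving---and that these local improvements compose to global optimality via the concavity of $L^1$ in $\beta$---is the delicate technical step. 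For $L^\infty$, the structure is much simpler since only $\beta_1$ needs to be pushed up, so the argument reduces to a single one-step exchange.
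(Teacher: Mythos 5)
Your route is the same as the paper's: reduce the EFRM to an ordinary RO game on the effective rewards $\beta_i = \overline{\alpha}_i - \xi$ (the paper does this via \Cref{prop:symmetric-char-low-cost}), express the benchmarks as integrals of $c^{-1}(h_n(\beta,\cdot))$, and run the pairwise-transfer arguments of \Cref{app-lem: RO-action-improve-mean} and \Cref{app-lem: RO-action-improve-max} under the floor and budget constraints. You are more explicit than the paper about the constrained water-filling and the local-to-global step (the paper's proof is two sentences citing those lemmas), which is welcome.

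However, one step in your $L^1$ argument fails as written. The identity $h_n(\beta,F(q)) = c(q)$, and hence the objective $\int_0^1 c^{-1}(h_n(\beta,t))\,\mathsf{d}t$, is valid only when the effective bottom reward is zero. If the reallocation lifts $\overline{\alpha}_n$ strictly above $\xi$, then $\beta_n>0$, each creator's equilibrium utility equals $\beta_n$ rather than $0$ (this is exactly the content of \Cref{prop:symmetric-char-low-cost}), and the correct quantile function is $c^{-1}\bigl(h_n(\beta,t)-\beta_n\bigr)$. Under this corrected objective, transferring $\delta$ from a rank $j<n$ into rank $n$ changes the argument of $c^{-1}$ by $\delta\bigl((1-t)^{n-1} - \binom{n-1}{j-1}t^{n-j}(1-t)^{j-1} - 1\bigr)\le 0$, so such a transfer strictly \emph{decreases} $L^1$ --- the opposite of what your ``equalize at the bottom ranks'' step asserts. (Consistently, the unconstrained optimum in \Cref{prop: RO-optimal-mech-ori} pins $\alpha_n=0$; in the degenerate limit where all ranks are equal, $h_n(\beta,t)-\beta_n\equiv 0$ and every creator produces quality $0$.) Your proof therefore needs to (i) work with the shifted objective, (ii) exclude rank $n$ from the flattening exchange and argue the maximizer keeps $\beta_n$ at its floor while water-filling only ranks $1,\dots,n-1$, and (iii) check that the Max-Min output actually has this form (the algorithm's pseudocode is ambiguous about whether rank $n$ is lifted). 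The $L^\infty$ half is unaffected, since Max-Max leaves every rank below the top at its floor, and there your single-exchange argument matches the paper's.
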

The results given in \Cref{thm: optimal-subsidy-scheme} also matches with observations in \Cref{prop: RO-optimal-mech-ori,prop: RO-optimal-max}, where the Max-Min scheme is trying to fix the rewards towards the final elimination mechanism and the Max-Max scheme matches the Top-1 mechanism.

\subsection{Strategic Barriers under Proportional Mechanism} \label{sec: PMC}

For the PM, we also study the case when the platform is charging an entry fee $\xi$ to all creators. Similar to the above, this corresponds to the scenario that $c_i(q) \rightarrow c_i(q) + \xi$.

\begin{definition}[Proportional Mechanism with Entry Fee]
    With entry fee $\xi > 0$, the utilities of the creator $i$ is \[
    u_i(q_i) = \frac{q_i}{\sum_{j\in \mathcal{C}}q_j}- c_i(q_i) - \xi,
    \]
    where the set $\mathcal{C} \subset \mathcal{I}$ is the maximal set of creators with $u_i(q_i^\star) \geq 0$ for all $i \in \mathcal{C}$ given the NE $(q_i^\star)_{i\in \mathcal{I}}$ of PM facilitated by creators in $\mathcal{I}$.  
\end{definition}
Indeed, by charging an entry fee, without considering the positivity of the utilities, the equilibrium strategies remain the same as those under PM by \Cref{eq: barrier-to-entry}. However, the utilities of the creators might be negative. Hence, we only consider the equilibria where all the creators have positive utilities. Also, one could see that the average quality will not decrease even if the entry fee is not redistributed because of the unchanged equilibrium structure. It remains to investigate the set of creators $\mathcal{C}$ who are still contributing in the equilibrium. We start from the following proposition.
\begin{restatable}{proposition}{GoodnessOfCommission}
In the contributing equilibrium under PM, suppose that, for some $i,j \in \itemSet$, $c_i(z) > c_j(z)$ and $c_i'(z) > c_j'(z)$ for any $z \in (0,1)$. Then, $\uPO_i(q_i^\star, \q_{-i}^\star) < \uPO_j(q_j^\star,\q_{-j}^\star)$ and $q_i^\star < q_j^\star$. \label{prop: goodness-of-commision}
\end{restatable}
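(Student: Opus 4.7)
The plan is to exploit the first-order characterisation of the contributing equilibrium under PM. Since each $c_i$ is convex and every $q_i^\star>0$, the utility $q_i/\bigl(q_i + \sum_{k\ne i} q_k^\star\bigr) - c_i(q_i)$ is maximised at an interior point, so the FOC
\[
\frac{S - q_i^\star}{S^2} \;=\; c_i'(q_i^\star), \qquad S \;:=\; \sum_{k} q_k^\star,
\]
holds for every creator. I will read off both conclusions from this single equation plus the hypotheses $c_i>c_j$ and $c_i'>c_j'$ on $(0,1)$.

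For the inequality $q_i^\star<q_j^\star$ I would argue by contradiction. Assume $q_i^\star\ge q_j^\star$. Then the LHS of the FOC gives $c_i'(q_i^\star)=(S-q_i^\star)/S^2 \le (S-q_j^\star)/S^2 = c_j'(q_j^\star)$. On the other hand, monotonicity of $c_j'$ (convexity of $c_j$) combined with $c_i'>c_j'$ pointwise yields
\[
c_i'(q_i^\star) \;>\; c_j'(q_i^\star) \;\ge\; c_j'(q_j^\star),
\]
a contradiction. Hence $q_i^\star<q_j^\star$.

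For the utility comparison, write
\[
u_j^\star - u_i^\star \;=\; \frac{q_j^\star - q_i^\star}{S} \;-\; \bigl(c_j(q_j^\star) - c_i(q_i^\star)\bigr).
\]
I would split the cost term as $c_j(q_j^\star)-c_i(q_i^\star)=\bigl(c_j(q_j^\star)-c_j(q_i^\star)\bigr)+\bigl(c_j(q_i^\star)-c_i(q_i^\star)\bigr)$. The second bracket is strictly negative by the hypothesis $c_i(z)>c_j(z)$. For the first bracket, the mean value theorem gives some $\xi\in(q_i^\star,q_j^\star)$ with $c_j(q_j^\star)-c_j(q_i^\star)=c_j'(\xi)(q_j^\star-q_i^\star)$, and convexity of $c_j$ plus the FOC for $j$ yields
\[
c_j'(\xi) \;\le\; c_j'(q_j^\star) \;=\; \frac{S-q_j^\star}{S^2} \;<\; \frac{1}{S}.
\]
Combining, $c_j(q_j^\star)-c_i(q_i^\star) < (q_j^\star-q_i^\star)/S$, so $u_j^\star-u_i^\star>0$, as claimed.

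I do not anticipate a real obstacle; the only subtlety is tracking strictness. The strict comparison $q_i^\star<q_j^\star$ uses only $c_i'>c_j'$ pointwise together with convexity, while the strict utility gap uses the pointwise gap $c_i>c_j$ (the MVT-plus-FOC chain alone would give a weak inequality). If one wanted to relax the hypothesis to $c_i'\ge c_j'$ with a single point of strict inequality, the contradiction step would need to be refined, but under the stated strict assumption the argument is clean.
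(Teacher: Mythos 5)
Your proof is correct. The first half (showing $q_i^\star < q_j^\star$) is essentially the paper's argument: both derive a contradiction from the first-order condition $\frac{S-q_i^\star}{S^2}=c_i'(q_i^\star)$ together with convexity of $c_j$ and the pointwise ordering $c_i'>c_j'$ (the paper phrases it in rescaled variables $x_i^\star = q_i^\star/\beta^\star$, but the computation is identical). The second half is where you genuinely diverge. The paper compares utilities via a unilateral-deviation chain: it bounds $u_i(q_i^\star,\q_{-i}^\star)$ above by $u_j(q_i^\star,\q_{-j}^\star)$ (swapping $c_i$ for the cheaper $c_j$ and shrinking the denominator using $q_i^\star<q_j^\star$), and then invokes the Nash best-response property of $q_j^\star$ to conclude $u_j(q_i^\star,\q_{-j}^\star)\le u_j(q_j^\star,\q_{-j}^\star)$. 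You instead compute the gap $u_j^\star-u_i^\star$ directly, split the cost difference, and control $c_j(q_j^\star)-c_j(q_i^\star)$ by the mean value theorem combined with the bound $c_j'(q_j^\star)=\frac{S-q_j^\star}{S^2}<\frac 1S$ coming from the FOC. The paper's route is slightly more economical in hypotheses for this step --- it needs only the equilibrium optimality of $q_j^\star$, the ordering $q_i^\star<q_j^\star$, and $c_i>c_j$, with no further appeal to differentiability or convexity of $c_j$ --- whereas your route leans on the FOC and convexity again but yields an explicit quantitative lower bound on the utility gap, namely $(q_j^\star-q_i^\star)\bigl(\tfrac 1S - c_j'(q_j^\star)\bigr) + \bigl(c_i(q_i^\star)-c_j(q_i^\star)\bigr)$, which the deviation argument does not give. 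Your tracking of strictness (which inequality uses $c_i>c_j$ and which uses $c_i'>c_j'$, plus $q_j^\star>0$ for $\frac{S-q_j^\star}{S^2}<\frac 1S$) is careful and correct.
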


The above result indicates that charging the entry fees will only expel the creators with the lowest quality items. 
To see this, from \Cref{prop: goodness-of-commision}, once the cost functions are in strict ordering in both zeroth and first-order values, then the profits gained by the creators are also in the same ordering as their cost functions, where lower costs correspond to higher profits. Then, consider the NE under PM, $(q_1^\star,\ldots, q_n^\star)$, with $i,j\in \mathcal{I}$ satisfying the condition indicated in \Cref{prop: goodness-of-commision}, producer $i$ will first be eliminated from the market as $\xi$ increases. 

\begin{wrapfigure}{l}{0.5\textwidth}
  \centering
  \includegraphics[width=0.5\textwidth]{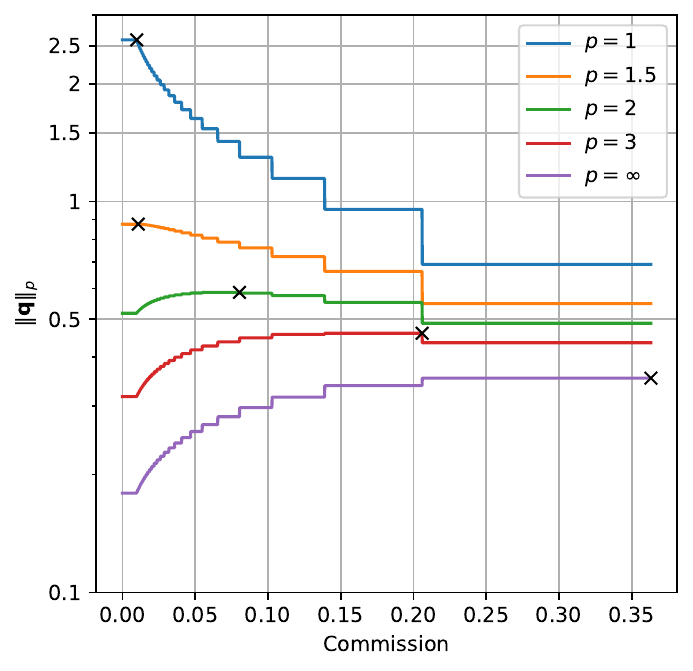}
\negspace
    \caption{Simulation of the proportional mechanism with entry fee of 30 creators. Each curve indicates the $L^p$ norm of the equilibrium qualities $\Vert \q \Vert_p$. The y-axis is in the log scale. The maximum of each curve are marked by ``$\times$''.}\label{fig: commission}
    \bignegspace
\end{wrapfigure}

To further validate the advantage of charging entry fees, we empirically investigate a market instance where the costs $c_i(q) = i\cdot q^2$ and $i\in \{1,\ldots,30\}$. We measure the $p$-norms of the overall qualities at the NE: $\Vert \q^\star \Vert_p$. The result is presented in \cref{fig: commission}. Different values of $p$ indicate different aspects of the market. When $p=1$, the $p$-norm measures the sum of qualities, and when $p = \infty$, the $p$-norm indicates the maximum quality. Different values of $p$ highlight various market aspects. For instance, when $p=1$ the $p$-norm calculates the total sum of qualities, whereas $p=\infty$ reflects the maximum quality. Some market moderators prioritize the aggregate market quality, even if individual item qualities are low. Others might value the highest quality item, despite less diversity. Using the $p$-norm allows us to explore a balance between these priorities and examine how entry fees influence these metrics.

From \Cref{fig: commission}, when $p=1$, a higher entry fee consistently results in a lower total quality. However, for all other values of $p$, as the entry fee increases, the $p$-norm metrics initially increase when the entry fee is relatively low. These metrics peak and then begin to decline. As such, we can conclude that charging a certain nonzero entry fee is generally beneficial, provided the platform is not solely focused on total quality. Nonetheless, excessively high entry fees can be detrimental as they may drive away too many creators from the market. Such phenomenon pinpoints the trade-off between market efficiency and diversity.
Such observations contradict the case in RO, where even subsidising the creators could lead to a decrease in the overall production qualities.

\negspace
\section{Conclusion}

We launch the first game-theoretic analysis of structural and strategic entry barriers in online content markets. Our theoretical findings offer two key insights that may inform the design of practical recommnder systems and reward mechanisms: (1) a carefully designed reward mechanism can induce a structural barrier that prevents collapse of content quality (\Cref{thm: structural-barrier-RO}); and (2) by charging entry fees and redirecting them into the reward pool appropriately, the platform can incentivize the production of higher-quality content (\Cref{thm: optimal-subsidy-scheme}).

Our analysis opens several avenues for further investigation.
First, our theoretical results for rank-order mechanisms focus on the symmetric case where all creators share the same cost function.
In practice, however, creators differ significantly in ability, cost, access to technology, fame and audience reach.
This raises the question: how should platforms design recommender systems and reward mechanisms that account for such heterogeneity while supporting both quality improvement and fairness objectives? 
Second, we have assumed that content quality is directly observable and measurable, whereas in real-world settings, quality assessments are noisy and heavily shaped by algorithmic recommendations. Understanding how such uncertainty interacts with entry barriers is an open problem.
Third, our use of Nash equilibrium  as a solution concept presumes perfect information and static behavior, yet creators often operate with limited feedback and adapt their strategies over time, possibly through learning algorithms. Modeling and analyzing the dynamic interactions between adaptive creators and evolving recommender systems presents a challenging but important direction for future work.
Finally, empirical validation of the relationship between content quality, production costs, and market outcomes using real platform data could provide valuable insights and inform more effective policy and mechanism design.

\bibliographystyle{plainnat}
\bibliography{refs}

\begin{thebibliography}{28}
\providecommand{\natexlab}[1]{#1}
\providecommand{\url}[1]{\texttt{#1}}
\expandafter\ifx\csname urlstyle\endcsname\relax
  \providecommand{\doi}[1]{doi: #1}\else
  \providecommand{\doi}{doi: \begingroup \urlstyle{rm}\Url}\fi

\bibitem[Ben-Porat and Tennenholtz(2018)]{ben2018game}
Omer Ben-Porat and Moshe Tennenholtz.
\newblock A game-theoretic approach to recommendation systems with strategic
  content providers.
\newblock \emph{Advances in Neural Information Processing Systems}, 31, 2018.

\bibitem[Bravo et~al.(2018)Bravo, Leslie, and Mertikopoulos]{bravo2018bandit}
Mario Bravo, David Leslie, and Panayotis Mertikopoulos.
\newblock Bandit learning in concave n-person games.
\newblock \emph{Advances in Neural Information Processing Systems}, 31, 2018.

\bibitem[Cai et~al.(2022)Cai, Oikonomou, and Zheng]{cai2022finite}
Yang Cai, Argyris Oikonomou, and Weiqiang Zheng.
\newblock Finite-time last-iterate convergence for learning in multi-player
  games.
\newblock \emph{Advances in Neural Information Processing Systems},
  35:\penalty0 33904--33919, 2022.

\bibitem[Cournot(1838)]{cournot1838recherches}
Antoine~Augustin Cournot.
\newblock Recherches sur les principes mathematiques de la theorie des
  richesses.
\newblock \emph{Making of the Modern World}, 1838.

\bibitem[Even-{D}ar et~al.(2009)Even-{D}ar, Mansour, and
  Nadav]{evendar2009stoc}
Eyal Even-{D}ar, Yishay Mansour, and Uri Nadav.
\newblock On the convergence of regret minimization dynamics in concave games.
\newblock In \emph{Proceedings of the Forty-First Annual ACM Symposium on
  Theory of Computing}, STOC '09, page 523–532, New York, NY, USA, 2009.
  Association for Computing Machinery.
\newblock ISBN 9781605585062.
\newblock \doi{10.1145/1536414.1536486}.
\newblock URL \url{https://doi.org/10.1145/1536414.1536486}.

\bibitem[Ghosh and Hummel(2014)]{GH14}
Arpita Ghosh and Patrick Hummel.
\newblock A game-theoretic analysis of rank-order mechanisms for user-generated
  content.
\newblock \emph{Journal of Economic Theory}, 154:\penalty0 349--374, 2014.
\newblock ISSN 0022-0531.
\newblock \doi{https://doi.org/10.1016/j.jet.2014.09.009}.
\newblock URL
  \url{https://www.sciencedirect.com/science/article/pii/S0022053114001306}.

\bibitem[Ghosh and McAfee(2011)]{ghosh2011incentivizing}
Arpita Ghosh and Preston McAfee.
\newblock Incentivizing high-quality user-generated content.
\newblock In \emph{Proceedings of the 20th international conference on World
  wide web}, pages 137--146, 2011.

\bibitem[Hossain and Chen(2024)]{hossain2024dataMarket}
Safwan Hossain and Yiling Chen.
\newblock Equilibrium of data markets with externality.
\newblock In \emph{Proceedings of the 40th International Conference on Machine
  Learning}, Proceedings of Machine Learning Research. PMLR, 2024.

\bibitem[Hron et~al.(2023)Hron, Krauth, Jordan, Kilbertus, and
  Dean]{hron2023Modeling}
Jiri Hron, Karl Krauth, Michael~I. Jordan, Niki Kilbertus, and Sarah Dean.
\newblock Modeling content creator incentives on algorithm-curated platforms.
\newblock In \emph{The Eleventh International Conference on Learning
  Representations, {ICLR} 2023, Kigali, Rwanda, May 1-5, 2023}. OpenReview.net,
  2023.
\newblock URL \url{https://openreview.net/forum?id=l6CpxixmUg}.

\bibitem[Jagadeesan et~al.(2023)Jagadeesan, Garg, and
  Steinhardt]{jagadeesan2023supply}
Meena Jagadeesan, Nikhil Garg, and Jacob Steinhardt.
\newblock Supply-side equilibria in recommender systems.
\newblock \emph{Advances in Neural Information Processing Systems},
  36:\penalty0 14597--14608, 2023.

\bibitem[Kelly et~al.(1998)Kelly, Maulloo, and Tan]{kelly1998rate}
Frank~P Kelly, Aman~K Maulloo, and David Kim~Hong Tan.
\newblock Rate control for communication networks: shadow prices, proportional
  fairness and stability.
\newblock \emph{Journal of the Operational Research society}, 49\penalty0
  (3):\penalty0 237--252, 1998.

\bibitem[McAfee et~al.(2004)McAfee, Mialon, and
  Williams]{mcafee2004whatIsBarrierToEntry}
Preston~R. McAfee, Hugo~M. Mialon, and Michael~A. Williams.
\newblock What is a barrier to entry?
\newblock \emph{American Economic Review}, 94\penalty0 (2):\penalty0 461–465,
  May 2004.
\newblock \doi{10.1257/0002828041302235}.
\newblock URL
  \url{https://www.aeaweb.org/articles?id=10.1257/0002828041302235}.

\bibitem[Mirrokni et~al.(2010)Mirrokni, Muthukrishnan, and
  Nadav]{mirrokni2010quasi}
Vahab Mirrokni, S~Muthukrishnan, and Uri Nadav.
\newblock Quasi-proportional mechanisms: Prior-free revenue maximization.
\newblock In \emph{Latin American Symposium on Theoretical Informatics}, pages
  565--576. Springer, 2010.

\bibitem[Qian and Jain(2024)]{qian2024digital}
Kun Qian and Sanjay Jain.
\newblock Digital content creation: An analysis of the impact of recommendation
  systems.
\newblock \emph{Management Science}, 2024.

\bibitem[Rosen(1965)]{rosen1965exist}
J.~B. Rosen.
\newblock Existence and uniqueness of equilibrium points for concave n-person
  games.
\newblock \emph{Econometrica}, 33\penalty0 (3):\penalty0 520--534, 1965.
\newblock ISSN 00129682, 14680262.
\newblock URL \url{http://www.jstor.org/stable/1911749}.

\bibitem[Salganik et~al.(2006)Salganik, Dodds, and
  Watts]{salganik2006experimental}
Matthew~J Salganik, Peter~Sheridan Dodds, and Duncan~J Watts.
\newblock Experimental study of inequality and unpredictability in an
  artificial cultural market.
\newblock \emph{science}, 311\penalty0 (5762):\penalty0 854--856, 2006.

\bibitem[Stoica et~al.(1996)Stoica, Abdel-Wahab, Jeffay, Baruah, Gehrke, and
  Plaxton]{stoica1996proportional}
Ion Stoica, Hussein Abdel-Wahab, Kevin Jeffay, Sanjoy~K Baruah, Johannes~E
  Gehrke, and C~Greg Plaxton.
\newblock A proportional share resource allocation algorithm for real-time,
  time-shared systems.
\newblock In \emph{17th IEEE Real-Time Systems Symposium}, pages 288--299.
  IEEE, 1996.

\bibitem[Szidarovszky and Okuguchi(1997)]{SZIDAROVSZKY1997135}
Ferenc Szidarovszky and Koji Okuguchi.
\newblock On the existence and uniqueness of pure nash equilibrium in
  rent-seeking games.
\newblock \emph{Games and Economic Behavior}, 18\penalty0 (1):\penalty0
  135--140, 1997.
\newblock ISSN 0899-8256.
\newblock \doi{https://doi.org/10.1006/game.1997.0517}.
\newblock URL
  \url{https://www.sciencedirect.com/science/article/pii/S0899825697905170}.

\bibitem[TikTok(2024)]{TikTokReward}
TikTok.
\newblock Creator rewards program, 2024.
\newblock Available at:
  https://www.tiktok.com/creator-academy/en/article/creator-rewards-program,
  accessed in Aug, 2024.

\bibitem[Tullock(1967)]{tullock1967welfare}
Gordon Tullock.
\newblock The welfare costs of tariffs, monopolies, and theft.
\newblock \emph{Western Economic Journal}, 5\penalty0 (3):\penalty0 224--232,
  1967.

\bibitem[West(2005)]{west2005barriers}
Jeremy~K. West.
\newblock Barriers to entry.
\newblock Technical Report Working Paper No. 58, OECD Roundtables on
  Competition Policy, 2005.
\newblock URL \url{https://ssrn.com/abstract=893162}.

\bibitem[Yao et~al.(2023)Yao, Li, Nekipelov, Wang, and Xu]{yao2023howBad}
Fan Yao, Chuanhao Li, Denis Nekipelov, Hongning Wang, and Haifeng Xu.
\newblock How bad is top-$k$ recommendation under competing content creators?
\newblock In Andreas Krause, Emma Brunskill, Kyunghyun Cho, Barbara Engelhardt,
  Sivan Sabato, and Jonathan Scarlett, editors, \emph{Proceedings of the 40th
  International Conference on Machine Learning}, volume 202 of
  \emph{Proceedings of Machine Learning Research}, pages 39674--39701. PMLR,
  23--29 Jul 2023.
\newblock URL \url{https://proceedings.mlr.press/v202/yao23b.html}.

\bibitem[Yao et~al.(2024{\natexlab{a}})Yao, Li, Nekipelov, Wang, and
  Xu]{yao2024human}
Fan Yao, Chuanhao Li, Denis Nekipelov, Hongning Wang, and Haifeng Xu.
\newblock Human vs. generative ai in content creation competition: symbiosis or
  conflict?
\newblock \emph{arXiv preprint arXiv:2402.15467}, 2024{\natexlab{a}}.

\bibitem[Yao et~al.(2024{\natexlab{b}})Yao, Li, Sankararaman, Liao, Zhu, Wang,
  Wang, and Xu]{yao2024rethinking}
Fan Yao, Chuanhao Li, Karthik~Abinav Sankararaman, Yiming Liao, Yan Zhu, Qifan
  Wang, Hongning Wang, and Haifeng Xu.
\newblock Rethinking incentives in recommender systems: are monotone rewards
  always beneficial?
\newblock \emph{Advances in Neural Information Processing Systems}, 36,
  2024{\natexlab{b}}.

\bibitem[Yao et~al.(2024{\natexlab{c}})Yao, Liao, Liu, Nie, Wang, Xu, and
  Wang]{yao2024unveiling}
Fan Yao, Yiming Liao, Jingzhou Liu, Shaoliang Nie, Qifan Wang, Haifeng Xu, and
  Hongning Wang.
\newblock Unveiling user satisfaction and creator productivity trade-offs in
  recommendation platforms.
\newblock \emph{arXiv preprint arXiv:2410.23683}, 2024{\natexlab{c}}.

\bibitem[Yao et~al.(2024{\natexlab{d}})Yao, Liao, Wu, Li, Zhu, Yang, Liu, Wang,
  Xu, and Wang]{yao2024user}
Fan Yao, Yiming Liao, Mingzhe Wu, Chuanhao Li, Yan Zhu, James Yang, Jingzhou
  Liu, Qifan Wang, Haifeng Xu, and Hongning Wang.
\newblock User welfare optimization in recommender systems with competing
  content creators.
\newblock In \emph{Proceedings of the 30th ACM SIGKDD Conference on Knowledge
  Discovery and Data Mining}, pages 3874--3885, 2024{\natexlab{d}}.

\bibitem[YouTube(2024)]{youtubePartnerProgram}
YouTube.
\newblock Youtube partner program, 2024.
\newblock Available at: https://support.google.com/youtube/answer/72851?hl=en,
  accessed in Aug, 2024.

\bibitem[Zhu et~al.(2023)Zhu, Cheung, and Xie]{zhu2023stability}
Haiqing Zhu, Yun~Kuen Cheung, and Lexing Xie.
\newblock Stability and efficiency of personalised cultural markets.
\newblock In \emph{Proceedings of the ACM Web Conference 2023}, pages
  3447--3455, 2023.

\end{thebibliography}
\newpage
\appendix

\section{Summary of Related Work} 
\label{app:relatedwork}
In the following table, we summarise the basic setup and their contribution (informally) in equilibrium characterisation of recent literature in modelling the competition in content markets.

\begin{table}[H]
\caption{Comparison of recent work on content markets}\label{tab:relatedwork}
\begin{tabularx}{0.96\textwidth}{ >{\hsize=.2\hsize}X >{\hsize=.28\hsize}X >{\hsize=.22\hsize}X >{\hsize=.32\hsize}X}
\toprule
\textbf{Paper} & \textbf{Reward Mechanism}                & \textbf{Cost Function} & \textbf{Equilibrium Characterisation}    \\ \hline
\cite{GH14}              & Rank-Order + Proportional                                    &Convex, Heterogenoues                  & Existence, asymptotical characterisation                               \\ \hline
\cite{ben2018game}             & Designed monotone reward                                        & Any\footnotemark[1]                       & Potential Game                                       \\ \hline

\cite{jagadeesan2023supply} & Top-1\footnotemark[2] & homogenous - $p$-norm & Existence of Symmetric MNE, Characterisation on genres. \\ \hline
\cite{yao2023howBad}              & Top-K                                    & NA                      & Coarse correlated equilibrium exists, bounds of PoA                                        \\ \hline
\cite{hron2023Modeling}              & Proportional                                   & NA                & Existence of MNE, PNE exists in some cases.                           \\ \hline
\cite{yao2024human}          & Proportional & $c_i x_i^\rho$ for main results   & bounding the micro/macro level costs of productions \\ \hline
\cite{yao2024rethinking}              & Designed monotone reward                                        & Any                      & Potential Game                                       \\ \hline
\cite{yao2024unveiling}          & Proportional                 & Increasing and convex $c(x_i)$, heterogeneous                      & Existence and Uniqueness                               \\ \hline
\cite{yao2024user}              & Proportional + Top-K\footnotemark[3]                                    &NA                  & Empirically - local NE                                   \\ \hline
\textbf{Our work}   & \textbf{Rank-Order + Proportional} & \textbf{Convex, differentiable} & \textbf{Existance and characterisation of structural and strategic entry barriers.} \\ \hline
\end{tabularx}
\end{table}
\footnotetext[1]{The cost function here is not very relevant since the main theme is to deduce a potential game structure}
\footnotetext[2]{ Rank-Order mechanism with $\alpha_1 =1$ and $\alpha_k = 0$ for any $k> 1$.}
\footnotetext[3]{Top-K refers to the reward mechanisms that only give rewards to the Top-K quality items. }

Additionally, we give the following remarks on the differences between our works and existing literature: 
\begin{itemize}
    \item \cite{yao2024human,yao2024unveiling} also adopted the Tullock contest model (equivalent to our proportional mechanism). Our focus in this paper is the characterisation of the structural and strategic entry barriers among the PM mechanism. To the best of our knowledge, it is not discussed in previous works. 
    \item Our RO mechanism is a strong generalisation of the Top-1 mechanism utilised in \cite{jagadeesan2023supply}. Also, our focus is to characterise the possibly optimal mechanism and identify the entry barriers, which is not discussed in the previous works.

    \item Our model differentiates and extends the setup of \citet{GH14} for several perspectives. First, for the cost functions, we do not impose any global assumptions other than differentiability and $c_i(1) > 1$ ( the total resource in the market (which is normalised to $1$ without loss of generality) is not enough to produce a ``perfect'' item). Second, our model mainly considers the scenarios where the total number of producers and total resources are constrained, which better matches the real-world scenarios.

    \item Fundamentally, the game under PM is one instance of concave games \cite{rosen1965exist}, which is studied extensively over the past few decades, including \cite{evendar2009stoc,cai2022finite} on the no-regret dynamics to find NE, \cite{bravo2018bandit} on solving NE under bandit feedback setting.
\end{itemize}

\section{Proof of \Cref{prop: RO-equilibrim-eq}}
\label{app-sec: Proof-of-RO-equilibrim-eq}
We start with the proving there is no PNE exists for RO. Then, we finalise the characterisation of the mixed strategy NE.
\begin{restatable}{lemma}{RONoPNE}
    If $\alpha_1 \geq \alpha_2 \geq \ldots \geq \alpha_n = 0$ and $\alpha_1 > 0$, there does not exist pure Nash equilibria under RO. There exists a symmetric mixed strategy Nash equilibrium without any point mass. \label{lem: RO-No-PNE}
\end{restatable}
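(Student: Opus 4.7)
The plan is to prove the two assertions---non-existence of pure Nash equilibria (PNE) and existence of a symmetric mixed NE without atoms---separately. Throughout, I read the lemma in the Case~1 regime of \Cref{prop: RO-equilibrim-eq}, namely $c(1)\ge \alpha_1$, which is the natural setting in which both conclusions can hold simultaneously (Cases~2 and~3 admit pure NE and/or point masses).

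For the non-existence of PNE, I would argue by contradiction. Suppose $(q_1^\star,\ldots,q_n^\star)$ is a PNE; partition the creators into groups of equal qualities $q^{(1)}>q^{(2)}>\cdots>q^{(g)}\ge 0$ with sizes $m_1,\ldots,m_g$. A creator in group $j$ occupies one of the ranks $M_{j-1}+1,\ldots,M_j$ uniformly at random (where $M_j=\sum_{i\le j}m_i$), so their expected reward is $\bar\alpha^{(j)}=\frac{1}{m_j}\sum_{i=M_{j-1}+1}^{M_j}\alpha_i$. I would then rule out each configuration via three families of deviations: (i) a singleton group with $q^{(j)}>0$ can slightly shrink its quality without changing rank, strictly lowering cost; (ii) in a non-singleton group with $q^{(j)}<1$, a tiny upward deviation captures the top rank inside the group with reward $\alpha_{M_{j-1}+1}$, so non-profitability forces $\alpha_{M_{j-1}+1}=\alpha_{M_{j-1}+2}=\cdots=\alpha_{M_j}$; (iii) using this forced equality, a tiny downward deviation from the same group keeps the reward at $\bar\alpha^{(j)}$ but strictly reduces cost, hence is profitable. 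The remaining boundary case $q^{(1)}=1$ is handled by comparing to a jump all the way to $q=0$, which is profitable because the Case~1 hypothesis yields $c(1)\ge \alpha_1\ge \bar\alpha^{(1)}$ with strict inequality unless all $\alpha_i$ in the top group coincide (but then $\alpha_n=0$ and $\alpha_1>0$ together force a contradiction). Each chain terminates in a contradiction, so no PNE exists.

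For the existence of a symmetric mixed NE without atoms, I would construct the CDF $F$ explicitly via the functional equation $h_n(\Alpha,F(q))=c(q)$ on $[0,q_{\max}]$, where $q_{\max}=c^{-1}(\alpha_1)\in[0,1]$ (using $c(1)\ge\alpha_1$). Viewing $h_n(\Alpha,t)$ as the expectation $\expect{\alpha_{K+1}}$ with $K\sim\mathrm{Bin}(n-1,1-t)$ shows it is continuous and strictly increasing from $0$ to $\alpha_1$ as $t$ runs over $[0,1]$ (strictness using $\alpha_1>\alpha_n=0$), so its inverse exists. Composing with $c$ yields a continuous, strictly increasing $F:[0,q_{\max}]\to[0,1]$ with $F(0)=0$ and $F(q_{\max})=1$---a valid CDF with no point mass. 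To verify the NE condition, I observe that absence of atoms makes ties a probability-zero event, so a deviator playing quality $q$ faces expected reward $h_n(\Alpha,F(q))$; this gives expected utility identically $0$ on $[0,q_{\max}]$ and $\alpha_1-c(q)<0$ for $q>q_{\max}$, confirming that $F$ is a symmetric best response.

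The main obstacle will be the case analysis for non-existence of PNE, especially the bookkeeping around singleton vs.\ non-singleton groups and the two boundary configurations $q=0$ and $q=1$: the forced equalities from the upward-deviation argument must line up exactly to make the downward-deviation argument bite. The existence half is mechanical once strict monotonicity of $h_n(\Alpha,\cdot)$ and the Case~1 bound $q_{\max}\le 1$ are in hand; the only subtlety is confirming that ties truly have probability zero under the continuous $F$, which is what makes $h_n(\Alpha,F(q))$ the exact expected-reward formula for a deviator.
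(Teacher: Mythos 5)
Your proposal is correct in substance and, for the non-existence half, follows the same basic deviation logic as the paper's proof (shrink quality when strictly ranked above a neighbour; jump up to break a tie), but it is considerably more careful, and the extra care matters. The paper's argument is two lines: it handles ties by "set $q_i' = q_i^* + \epsilon$", which silently fails at $q_i^*=1$ --- and indeed, without your added hypothesis $c(1)\ge\alpha_1$ the lemma as stated is contradicted by the paper's own \Cref{prop: RO-equilibrim-eq} Case 2, where all creators at $q=1$ is a pure NE. Your explicit restriction to the Case-1 regime and your separate treatment of the $q^{(1)}=1$ boundary are exactly what the paper's proof is missing. For the existence half you diverge more substantially: the paper simply cites \citet{GH14}, whereas you construct the atomless symmetric equilibrium explicitly from the functional equation $h_n(\cdot,F(q))=c(q)$ and verify the best-response property directly (essentially the same verification the paper carries out only later, in its appendix proposition for the EFRM setting). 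Your route is self-contained and also delivers Case 1 of \Cref{prop: RO-equilibrim-eq} as a by-product; the paper's route is shorter but leans on an external result.

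One small imprecision to fix in the boundary case: when the top group sits at $q=1$, your jump-to-zero comparison forces $\bar\alpha^{(1)}=\alpha_1=c(1)$, i.e.\ all rewards in the top group coincide, but the parenthetical claim that "$\alpha_n=0$ and $\alpha_1>0$ force a contradiction" is only immediate if that group contains all $n$ creators. If it does not (e.g.\ ranks $1,\dots,m_1$ share reward $\alpha_1=c(1)$ and everyone else sits at $q=0$ with zero reward), you must instead apply your downward deviation (iii) to the top group: since its rewards are now all equal, a member dropping to $1-\epsilon$ keeps reward $\alpha_{m_1}=\bar\alpha^{(1)}$ (she still outranks the creators at $0$) while strictly reducing cost. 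The tool is already in your proposal; it just needs to be invoked here rather than appealing to $\alpha_n=0$ directly.
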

\begin{proof}
    Suppose there exists PNE $(q_1^*,\ldots,q_n^*)$. Say $\alpha_{i} > \alpha_{i+1}$. WLOG, if $q_1^* > q_2^* >\ldots > q_n^*$, then we could let $q_{i}' = q_{i}^* - \epsilon$ such that $q_{i}' > q_{i+1}^*$. In this case, the cost for the producer $1$ is lower. In the case that there exists $q_i^* = q_{i+1}^*$ for some $i$, one can set $q_i' = q_i^* + \epsilon$, which also leads to contradiction. The existence of mixed strategy equilibrium is proven in  \cite{GH14}.
\end{proof}

\ROChar*
\begin{proof}
    \textbf{Case 1:} We first note that the best response condition guarantees that, at every point $q$, the expected payoff from the market is the same. 
    
    We first show that the cdfs are atomless. Suppose there is any point mass in $\ROcdf$ at point $p_0 \in [0,1]$. Then, we can select one producer $i$. We change his strategy by moving the point mass at $p_0$ to some point $p_0 + \epsilon$ with $\epsilon$ small enough. Then, consider the event $E = \{\q_{-i} = p_0, q_i = p_0 + \epsilon\}$, it has non-zero positive probability. Hence, it gives a nonzero increase to the expected profit of producer $i$, by choosing $\epsilon$ small enough. Hence, there's no point mass in the symmetric mixed NE.
    
    Next, we show that the point $0$ is included in the NE. Suppose it is not included. Then, one could choose some point in $(0, r_{\min} )$ where $r_{\min}$ is the minimum of the support of the distribution. One could obtain better payoff by moving the mass to that point since it has lower cost. Similar arguments could also be applied to show that there's no ``interval holes'' in the support of the mixed NE. This also shows that the expected profit for every producer is $0$, since the distribution is atomless and includes $0$.

    \textbf{Case 2:} It is immediate to verify this is indeed an NE since any strategy deviation will lead to lower reward $\alpha_n = 0$ which is less than $\frac{1}{n}\sum_{i} \alpha_i$. The profit of this case is then clear.

    \textbf{Case 3:} Given the equilibrium stated above, for any potential strategies that are not in the support. The payoff is $h(\Alpha, 1-y)$ which is less than $h(\Beta, 1-y)$ for any $y$ since $\Alpha \leq \Beta$. Then, using the same argument as the proof of 1, one can again justify that $q=0$ must be in the support of the mixed strategy. Therefore, we have fixed the support of the mixed strategy and the payoff must be $0$, which means that $h(\Alpha, F(q)) = c(q)$.
    
\end{proof}

\section{Proof of \Cref{prop: RO-optimal-mech-ori}}
\label{app-sec: Proof-of-RO-optimal-mech-ori}
In the following proofs, we use the notation $g(q) = h(\Alpha,F(q)) $
The proof relies on the following technical lemma:
\begin{lemma}
 Suppose the total resource is constrained such that $\sum_{i=1}^n \alpha_n \leq 1$. The problem $\max_{\alpha_1, \ldots, \alpha_n} \expectsub{F}{q}$ is equivalent to the following problem
\[
\begin{aligned}
 &\max_{\alpha_1, \ldots, \alpha_n} \int_{0}^{1} c^{-1}(h(\Alpha ,y))\ \mathsf{d}y. \\
 &\text{s.t.}~~ \alpha_1 \geq \ldots \geq  \alpha_n = 0,~~\sum_{i=1}^n \alpha_i =  1.
\end{aligned}
\]
\label{prop: RO-OPT-prob-Ori}
\end{lemma}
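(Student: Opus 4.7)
The plan is to use the explicit characterization of the symmetric mixed NE from Case 1 of \Cref{prop: RO-equilibrim-eq} and perform a change of variable inside the expectation. First I would verify that under the constraint $\sum_i \alpha_i \le 1$ the game always falls in Case 1: since the rewards are non-increasing, $\alpha_1 \le 1$, and combined with the standing assumption $c(1) > 1$ used throughout the paper this gives $\alpha_1 < c(1)$. Hence the NE CDF $F$ is atomless, supported on $[0,q_{\max}]$ with $q_{\max} = c^{-1}(\alpha_1)$, and satisfies the functional equation $h_n(\Alpha, F(q)) = c(q)$ on the support. Evaluating this equation at $q=0$ forces $\alpha_n = h_n(\Alpha,0) = c(0) = 0$, which justifies the boundary constraint $\alpha_n = 0$ in the reformulated problem.

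Next I would carry out the substitution $y = F(q)$. Since $c$ is strictly increasing and $h_n(\Alpha, \cdot)$ is strictly increasing whenever $\alpha_1 > \alpha_n = 0$ (it is the expected reward of a creator at quantile rank $y$, which grows monotonically in $y$), $F$ is a continuous strictly increasing bijection from $[0, q_{\max}]$ onto $[0,1]$ with inverse $F^{-1}(y) = c^{-1}(h_n(\Alpha, y))$. The change of variable then yields
\[
\expectsub{q \sim F}{q} \;=\; \int_0^{q_{\max}} q \, \mathsf{d}F(q) \;=\; \int_0^{1} F^{-1}(y)\, \mathsf{d}y \;=\; \int_0^{1} c^{-1}(h_n(\Alpha, y))\, \mathsf{d}y,
\]
which is precisely the objective in the reformulation.

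It remains to tighten the budget from $\sum_i \alpha_i \le 1$ to $\sum_i \alpha_i = 1$. Note that $h_n(\Alpha, y)$ is linear in $\Alpha$ with non-negative coefficients, hence non-decreasing in each $\alpha_j$ separately, and $c^{-1}$ is monotone increasing; so the integrand above is non-decreasing in each $\alpha_j$ pointwise in $y$. Any slack $1 - \sum_i \alpha_i > 0$ can therefore be transferred entirely into $\alpha_1$ without violating the descending order $\alpha_1 \ge \cdots \ge \alpha_{n-1} \ge \alpha_n = 0$, and the transfer weakly increases the objective. Hence WLOG $\sum_i \alpha_i = 1$.

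The main technical subtlety I anticipate is the strict monotonicity of $F$ in degenerate cases (for example when several consecutive $\alpha_i$ coincide), which is needed so that $F^{-1}(y) = c^{-1}(h_n(\Alpha, y))$ is an equality of honest functions rather than only of generalised inverses. This is handled by noting that $h_n(\Alpha, \cdot)$ is a degree-$(n-1)$ polynomial whose derivative is strictly positive on $(0,1)$ as long as $\alpha_1 > \alpha_n$, so it remains strictly increasing on $[0,1]$; the remainder of the argument is a routine change of variable plus pointwise monotonicity on the budget.
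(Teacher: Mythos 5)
Your proposal is correct and follows essentially the same route as the paper's own proof: apply the Case 1 functional equation $h_n(\Alpha,F(q))=c(q)$ from \Cref{prop: RO-equilibrim-eq} to get $F^{-1}(y)=c^{-1}(h_n(\Alpha,y))$, then change variables $y=F(q)$ in the expectation. Your additional checks (that $c(1)>1$ forces Case 1, that $F$ is strictly increasing so $F^{-1}$ is an honest inverse, and that slack in the budget can be pushed into $\alpha_1$ to justify $\sum_i\alpha_i=1$) are all sound and in fact make the "equivalence" claim more complete than the paper's one-line change-of-variable argument.
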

\begin{proof}
    First, for any continuous and strictly increasing c.d.f. $y = F(x)$, we note that 
    \[
    \int_{0}^{1} F^{-1}(y)\,\mathsf{d}y =  \int_{0}^{x_{\max}} x\,\mathsf{d}F(x) = \expectsub{F}{x}.
    \]
    Therefore, by \Cref{prop: RO-equilibrim-eq}, we could see that
    \[
    h(\vec{\alpha}, y) = c(F^{-1}(y)) ,
    \]
    where $F$ denotes the c.d.f. of distribution $\rho$ introduced in \Cref{prop: RO-equilibrim-eq}. It implies
    \[
    \expectsub{F}{x} = \int_{0}^{1} F^{-1}(y)\,\mathsf{d}y = \int_{0}^{1} c^{-1}(h(\vec{\alpha} ,y))\ \mathsf{d}y.
    \]
    
\end{proof}

\begin{lemma}[Fast approximation] \label{app-lem: RO-action-improve-mean}
For some $\alpha_i < \alpha_j$, given $\delta$ small enough any action that increases $\alpha_i$ by $\delta$ and decreases $\alpha_j$ by $\delta$ will increase the objective $\expectsub{F}{q}$. 
\end{lemma}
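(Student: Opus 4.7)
The plan is to reduce the statement to a first-order perturbation calculation via Lemma~\ref{prop: RO-OPT-prob-Ori}, and then sign that first-order term using a single-crossing property of Bernstein basis polynomials together with the convexity of $c$. The ambient regime is Case~1 of Theorem~\ref{prop: RO-equilibrim-eq}, which forces $\alpha_n=0$, so I would implicitly restrict the perturbation to indices $i,j\le n-1$, ensuring the closed-form representation of the objective remains valid.

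First, Lemma~\ref{prop: RO-OPT-prob-Ori} gives $\expectsub{F}{q} = \int_0^1 c^{-1}(h_n(\Alpha,y))\,\mathsf{d}y$, and
\[
h_n(\Alpha,y) \;=\; \sum_{k=1}^n \alpha_k \, B_{k-1}(y), \qquad B_{k-1}(y)\;:=\;\binom{n-1}{k-1}\,y^{n-k}(1-y)^{k-1},
\]
is linear in $\Alpha$. Since $\alpha_1\ge\cdots\ge\alpha_n$, the hypothesis $\alpha_i<\alpha_j$ forces $i>j$, and the swap $\alpha_i\mapsto\alpha_i+\delta$, $\alpha_j\mapsto\alpha_j-\delta$ produces $\Delta h(y) = \delta\bigl(B_{i-1}(y) - B_{j-1}(y)\bigr)$. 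Taking $\delta$ small enough to preserve monotonicity of $\Alpha$, a first-order Taylor expansion reduces the claim to showing
\[
I \;:=\; \int_0^1 (c^{-1})'\bigl(h_n(\Alpha,y)\bigr)\,\bigl(B_{i-1}(y)-B_{j-1}(y)\bigr)\,\mathsf{d}y \;\ge\; 0,
\]
with the $O(\delta^2)$ remainder dominated uniformly.

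The structural step is to analyze $B_{i-1}-B_{j-1}$. A direct computation gives $B_{i-1}(y)/B_{j-1}(y)$ as a positive constant times $\bigl((1-y)/y\bigr)^{i-j}$, which is strictly decreasing on $(0,1)$ because $i-j>0$. Hence there is a unique $y_0\in(0,1)$ with $B_{i-1}(y_0)=B_{j-1}(y_0)$, $B_{i-1}>B_{j-1}$ on $(0,y_0)$, and $B_{i-1}<B_{j-1}$ on $(y_0,1)$. Because $\int_0^1 B_{k-1}\,\mathsf{d}y = 1/n$ for every $k$, the difference $g:=B_{i-1}-B_{j-1}$ has zero mean. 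Convexity of $c$ makes $c^{-1}$ concave, so $(c^{-1})'$ is nonincreasing; and $h_n(\Alpha,\cdot)$ is nondecreasing (higher quantile stochastically dominates under nonincreasing $\alpha_k$), hence $f(y):=(c^{-1})'\bigl(h_n(\Alpha,y)\bigr)$ is nonincreasing. Using $\int g=0$ to rewrite $I=\int_0^1 \bigl(f(y)-f(y_0)\bigr) g(y)\,\mathsf{d}y$, the two factors share sign on each of $(0,y_0)$ and $(y_0,1)$, so $I\ge 0$ follows.

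The main obstacle I expect is the strict-versus-weak gap. For merely convex $c$ (e.g., affine $c$), $f$ is constant and $I=0$, so the argument as stated only yields a weak increase. To recover a strict increase one needs $c$ strictly convex on a neighborhood of $h_n(\Alpha,y_0)$; I would either add this as a mild non-degeneracy or derive it from the running hypotheses of the calling proof of Proposition~\ref{prop: RO-optimal-mech-ori}. A secondary but routine check is that the Taylor remainder is uniformly $O(\delta^2)$: since $c$ is twice differentiable and $h_n(\Alpha,\cdot)$ takes values in a compact set, $(c^{-1})''$ is bounded on that range, which validates the first-order reduction for all sufficiently small $\delta$.
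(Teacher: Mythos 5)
Your proposal is correct and follows essentially the same route as the paper's proof: differentiate the closed-form objective $\int_0^1 c^{-1}(h_n(\Alpha,y))\,\mathsf{d}y$ in the perturbation direction, use the single-crossing property and zero mean of the Bernstein-basis difference, and exploit that $(c^{-1})'(h_n(\Alpha,\cdot))$ is nonincreasing by comparing it against its value at the crossing point $y_0$. Your observation that the resulting increase is only weak when $c$ is not strictly convex (e.g., linear cost, where the objective is invariant under the swap) is a fair caveat that the paper's own proof also leaves unresolved, since it likewise establishes only nonnegativity of the first-order term.
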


\begin{proof}
  By the Leibniz's rule, we have
  \begin{align*}
  &\frac{\partial }{\partial \alpha_i}\int_{0}^{1} c^{-1}(h(\boldsymbol{\alpha} ,y)) \,\mathsf{d}y - \frac{\partial }{\partial \alpha_j}\int_{0}^{1} c^{-1}(h(\boldsymbol{\alpha} ,y)) \,\mathsf{d}y \\
  &= \int_0^1 \frac{\partial c^{-1}}{\partial h(\boldsymbol{\alpha} ,y)}\cdot \left[\frac{\partial }{\partial \alpha_i}h(\boldsymbol{\alpha} ,y) -\frac{\partial }{\partial \alpha_j} h(\boldsymbol{\alpha} ,y)\right]\,\mathsf{d}y \\ 
   &= \int_0^1 \frac{\partial c^{-1}}{\partial h(\boldsymbol{\alpha} ,y)}\cdot \left[\binom{n-1}{i}y^{n-1-i}(1-y)^{i} - \binom{n-1}{j}y^{n-1-j}(1-y)^{j}\right]\,\mathsf{d}y.
\end{align*}
 Since $i > j$, we know that there exists $y_0$ such that $\binom{n-1}{i}y^{n-1-i}(1-y)^{i} > \binom{n-1}{j}y^{n-1-j}(1-y)^{j}$ for $y < y_0$ and $\binom{n-1}{i}y^{n-1-i}(1-y)^{i} < \binom{n-1}{j}y^{n-1-j}(1-y)^{j}$ for $y > y_0$. Hence, we have
\begin{align*}
&\int_0^1 \frac{\partial c^{-1}}{\partial h(\boldsymbol{\alpha} ,y)}\cdot \left[\binom{n-1}{i}y^{n-1-i}(1-y)^{i} - \binom{n-1}{j}y^{n-1-j}(1-y)^{j}\right]\,\mathsf{d}y \\
&= \int_0^{y_0} \frac{\partial c^{-1}}{\partial h(\boldsymbol{\alpha} ,y)}\cdot \left[\binom{n-1}{i}y^{n-1-i}(1-y)^{i} - \binom{n-1}{j}y^{n-1-j}(1-y)^{j}\right]\,\mathsf{d}y \\ 
&~~~~~~+\int_{y_0}^{1} \frac{\partial c^{-1}}{\partial h(\boldsymbol{\alpha} ,y)}\cdot \left[\binom{n-1}{i}y^{n-1-i}(1-y)^{i} - \binom{n-1}{j}y^{n-1-j}(1-y)^{j}\right]\,\mathsf{d}y \\
&\geq \int_0^{y_0} \frac{\partial c^{-1}}{\partial h(\boldsymbol{\alpha} ,y_0)}\cdot \left[\binom{n-1}{i}y^{n-1-i}(1-y)^{i} - \binom{n-1}{j}y^{n-1-j}(1-y)^{j}\right]\,\mathsf{d}y \\ 
&~~~~~~+\int_{y_0}^{1} \frac{\partial c^{-1}}{\partial h(\boldsymbol{\alpha} ,y_0)}\cdot \left[\binom{n-1}{i}y^{n-1-i}(1-y)^{i} - \binom{n-1}{j}y^{n-1-j}(1-y)^{j}\right]\,\mathsf{d}y \\
&= 0.
\end{align*}
where the inequality is due to fact that $\frac{\partial c^{-1}}{\partial h(\boldsymbol{\alpha},y)}$ is positive and decreasing in $y$. This proves the claim.
\end{proof}
\begin{proof}[Proof of \Cref{prop: RO-optimal-mech-ori}]
    By \Cref{app-lem: RO-action-improve-mean}, $\alpha_1 = \cdots = \alpha_{n-1} = \frac{1}{n-1}$ is the only set of rewards that admit no such manipulation to further increase the benchmark. Hence, it is optimal.
\end{proof}

\section{Proof of \Cref{prop: RO-optimal-max}}
\label{app-sec: proof-RO-optimal-max}
\begin{lemma} \label{app-lem: RO-OPT-max}
Suppose the total resource is constrained such that $\sum_{i=1}^n \alpha_n \leq 1$. The problem $\max_{\alpha_1, \ldots, \alpha_n} \expectsub{F}{\max\{q_1,\cdots, q_n\}}$ is equivalent to the following problem
\[
\begin{aligned}
 &\max_{\alpha_1, \ldots, \alpha_n} \int_{0}^{1} t^{n-1}c^{-1}(h(\Alpha ,t))\ \mathsf{d}t. \\
 &\text{s.t.}~~ \alpha_1 \geq \ldots \geq  \alpha_n \geq 0, ~~\sum_{i=1}^n \alpha_i =  1.
\end{aligned}
\]
\end{lemma}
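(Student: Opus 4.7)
The plan is to reduce the expectation of the maximum to a one-dimensional integral using the fact that the NE is a \emph{symmetric} mixed strategy, then change variables via the CDF, and finally invoke Case 1 of \Cref{prop: RO-equilibrim-eq} to replace $F^{-1}$ with the explicit expression involving $h_n(\Alpha, \cdot)$. Concretely, since $\sum_{i=1}^n \alpha_i = 1$ and $c(1) > 1 \ge \alpha_1$, the symmetric mixed NE falls into Case 1, which gives the functional identity $h_n(\Alpha, F(q)) = c(q)$ on the support $[0, q_{\max}]$ with $F(q_{\max}) = 1$. In particular $F$ is continuous and strictly increasing on its support, so $F^{-1}(t) = c^{-1}(h_n(\Alpha, t))$ for all $t \in [0,1]$.

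Next, since at the symmetric mixed NE the qualities $q_1, \ldots, q_n$ are i.i.d.\ with CDF $F$, the maximum has CDF $F^n$, so standard order-statistic computations give
\[
\expectsub{F}{\max\{q_1, \ldots, q_n\}} = \int_0^{q_{\max}} q \cdot n F(q)^{n-1} \, \mathsf{d}F(q).
\]
Performing the substitution $t = F(q)$, $\mathsf{d}t = \mathsf{d}F(q)$, the integral becomes $\int_0^1 n t^{n-1} F^{-1}(t) \, \mathsf{d}t$, and substituting $F^{-1}(t) = c^{-1}(h_n(\Alpha, t))$ yields
\[
\expectsub{F}{\max\{q_1, \ldots, q_n\}} = n \int_0^1 t^{n-1} c^{-1}(h_n(\Alpha, t)) \, \mathsf{d}t.
\]
Since $n$ is a positive constant independent of $\Alpha$, maximising this expression over admissible $\Alpha$ is equivalent to maximising $\int_0^1 t^{n-1} c^{-1}(h_n(\Alpha, t)) \, \mathsf{d}t$ under the same constraints, which is the statement of the lemma.

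There is essentially no serious obstacle: the argument is a direct order-statistic computation combined with the explicit equilibrium CDF from \Cref{prop: RO-equilibrim-eq}. The only mild care required is to check that Case 1 of that theorem is indeed the operative case under the budget $\sum_i \alpha_i = 1$ together with the standing assumption $c(1) > 1$, so that $F$ has no point mass and the change of variables $t = F(q)$ is valid on $[0,1]$; and to observe that monotonicity constraints $\alpha_1 \ge \cdots \ge \alpha_n \ge 0$ are carried over unchanged from the original optimisation to the reformulated one.
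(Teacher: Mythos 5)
Your proposal is correct and follows essentially the same route as the paper: both reduce $\expectsub{F}{\max_i q_i}$ to $n\int_0^1 t^{n-1}F^{-1}(t)\,\mathsf{d}t$ via the order-statistic CDF $F^n$ and a change of variables, then substitute $F^{-1}(t)=c^{-1}(h_n(\Alpha,t))$ from Case~1 of \Cref{prop: RO-equilibrim-eq}. Your explicit check that Case~1 is the operative case (since $\alpha_1\le 1 < c(1)$) is a detail the paper leaves implicit.
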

\begin{proof}
    Note that the cdf of $\max\{q_1,\cdots, q_n\}$ is $F^n(q)$. With the change of variable $y = F^n(q)$, $t = y^{1/n}$, we have
    \[
     \int_{0}^{q_{\max}} q ~\mathsf{d}F^n(q) = \int_{0}^{1} F^{-1}(y^{\frac{1}{n}})~\mathsf{d}y = \int_{0}^{1} F^{-1}(t)~\mathsf{d}t^n = n \cdot \int_{0}^{1} t^{n-1}F^{-1}(t)~\mathsf{d}t.
    \]
    By \cref{prop: RO-equilibrim-eq}, we know that $h(\vec{\alpha}, t) = c(F^{-1}(t))$. Replacing $F^{-1}(t)$ with $c^{-1}(h(\vec{\alpha}, t))$ will yield the results.  
\end{proof}

\begin{lemma} \label{app-lem: RO-action-improve-max}
For some $\alpha_i < \alpha_j$, given $\delta$ small enough any action that decreases $\alpha_i$ by $\delta$ and increases $\alpha_j$ by $\delta$ will increase the objective $\max_{\alpha_1, \ldots, \alpha_n} \expectsub{F}{\max\{q_1,\cdots, q_n\}}$. 
\end{lemma}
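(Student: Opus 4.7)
The plan is to mirror the proof of \Cref{app-lem: RO-action-improve-mean}, now using the equivalent objective $\int_0^1 t^{n-1} c^{-1}(h(\boldsymbol{\alpha},t))\,\mathsf{d}t$ established in \Cref{app-lem: RO-OPT-max}. First, Leibniz's rule gives
\begin{align*}
&\frac{\partial}{\partial \alpha_j}\int_0^1 t^{n-1} c^{-1}(h(\boldsymbol{\alpha},t))\,\mathsf{d}t - \frac{\partial}{\partial \alpha_i}\int_0^1 t^{n-1} c^{-1}(h(\boldsymbol{\alpha},t))\,\mathsf{d}t \\
&\qquad = \int_0^1 t^{n-1}\,\frac{\partial c^{-1}}{\partial h(\boldsymbol{\alpha},t)}\cdot\left[\binom{n-1}{j} t^{n-1-j}(1-t)^j - \binom{n-1}{i} t^{n-1-i}(1-t)^i\right]\mathsf{d}t,
\end{align*}
following the indexing convention used in the proof of \Cref{app-lem: RO-action-improve-mean}. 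Because $\alpha_i<\alpha_j$ combined with descending ordering forces $i>j$, the bracketed difference is negative on $[0,t_0]$ and positive on $[t_0,1]$ for some crossover $t_0\in (0,1)$, exactly the opposite sign pattern from the $L^1$ case. The additional factor $t^{n-1}$ from the $L^\infty$ objective concentrates the integrand's mass near $t=1$, and the proof must show that this concentration dominates.

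The crux is to prove that $\psi(t):= t^{n-1}\,\frac{\partial c^{-1}}{\partial h(\boldsymbol{\alpha},t)} = t^{n-1}/c'(F^{-1}(t))$ is non-decreasing on $[0,1]$. Once established, the same rearrangement argument as in \Cref{app-lem: RO-action-improve-mean} --- splitting the integral at $t_0$ and lower-bounding $\psi(t)$ by $\psi(t_0)$ on each subinterval (the negative bracket on $[0,t_0]$ flips the inequality) --- yields
\[
\int_0^1 \psi(t)\,\left[\binom{n-1}{j} t^{n-1-j}(1-t)^j - \binom{n-1}{i} t^{n-1-i}(1-t)^i\right]\mathsf{d}t \;\geq\; \psi(t_0)\cdot 0 \;=\; 0,
\]
since each binomial density integrates to $1/n$ and hence their difference integrates to zero.

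To establish monotonicity of $\psi$, I would take the logarithmic derivative (valid for $t>0$):
\[
\frac{\psi'(t)}{\psi(t)} \;=\; \frac{n-1}{t} \;-\; \frac{c''(F^{-1}(t))\,(F^{-1})'(t)}{c'(F^{-1}(t))}.
\]
Differentiating the NE identity $c(F^{-1}(t)) = h(\boldsymbol{\alpha},t)$ from \Cref{prop: RO-equilibrim-eq} yields $c'(F^{-1}(t))(F^{-1})'(t) = h_t(\boldsymbol{\alpha},t)$, so the subtracted term equals $\left[c''(F^{-1}(t))/c'(F^{-1}(t))^2\right]\cdot h_t(\boldsymbol{\alpha},t)$. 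The hypothesis $c''(q)\leq c'(q)^2$ then bounds the first factor above by $1$, reducing the monotonicity claim to the single inequality $t\cdot h_t(\boldsymbol{\alpha},t) \leq n-1$ for all $t\in[0,1]$.

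I expect the bound on $t\,h_t$ to be the main technical step. Direct differentiation of $h$ combined with the identities $(n-1-k)\binom{n-1}{k} = (n-1)\binom{n-2}{k}$ and $k\binom{n-1}{k} = (n-1)\binom{n-2}{k-1}$, followed by a telescoping index shift (Abel-summation), yields
\[
h_t(\boldsymbol{\alpha},t) \;=\; (n-1)\sum_{k=0}^{n-2}(\alpha_{k+1}-\alpha_{k+2})\binom{n-2}{k} t^{n-2-k}(1-t)^k,
\]
in which every coefficient is non-negative by the descending ordering of $\boldsymbol{\alpha}$ (with the WLOG convention $\alpha_n=0$). Multiplying by $t$, bounding each $\alpha_{k+1}-\alpha_{k+2}$ by $\alpha_1\leq 1$, and using $\sum_{k=0}^{n-2}\binom{n-2}{k}t^{n-2-k}(1-t)^k=1$, we obtain $t\,h_t(\boldsymbol{\alpha},t)\leq (n-1)\alpha_1\,t\leq n-1$, closing the argument. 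Strict inequality in the final conclusion follows whenever $\boldsymbol{\alpha}$ is non-degenerate, so that $\psi$ is strictly increasing on a set of positive measure within the support of the bracket.
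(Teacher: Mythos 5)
Your proposal is correct and follows essentially the same route as the paper: reduce to the objective $\int_0^1 t^{n-1}c^{-1}(h(\boldsymbol{\alpha},t))\,\mathsf{d}t$, reuse the split-at-the-crossover rearrangement from the $L^1$ lemma with the sign pattern reversed, and hinge everything on the monotonicity of $t^{n-1}(c^{-1})'(h(\boldsymbol{\alpha},t))$, which the paper isolates as a separate auxiliary lemma and likewise reduces to $c''/(c')^2\le 1/\mathbb{E}[\text{gap}]$ with $\mathbb{E}[\text{gap}]\le 1$. Your logarithmic-derivative derivation of that monotonicity (via $t\,h_t\le n-1$ from the Abel-summed form of $h_t$) is a slightly cleaner packaging of the paper's product-rule computation, but not a different argument.
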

\begin{proof}
    Suppose $\Alpha_2$ is the result of such action on some indices $i<j$ to $\Alpha_1$. Utilising the concavity of $c^{-1}$, with exactly the same arguments with \cref{prop: RO-optimal-mech-ori}, the result will follow. The only difference is the usage of the assumption $(c^{-1})'g(\Alpha, t)\cdot t^{n-1}$ is increasing. The inequality of the equation above is reversed. Hence, result is also reversed.
\end{proof}
\begin{proof}[Proof of \Cref{prop: RO-optimal-max}]
    Again, the set of rewards with $\alpha_1= 1$ is the only set of rewards that admit no manipulations mentioned by \Cref{app-lem: RO-action-improve-max}. Hence, it is optimal.
\end{proof}
\begin{lemma}
    $(c^{-1})'h(\Alpha, t)\cdot t^{n-1}$ is increasing as long as for the maximal gap we have $G = \max_{i} \{\alpha_i - \alpha_{i-1}\}$ stastisfies
    \[
       \left(-\frac{1}{c'}\right)' <  \frac{1}{G}.
    \]
\end{lemma}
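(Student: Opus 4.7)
The strategy is to switch to the quantile variable $t = F(q)$. Under the NE of Case~1 of \Cref{prop: RO-equilibrim-eq}, the defining equation $c(F^{-1}(t)) = h_n(\Alpha,t)$ gives $(c^{-1})'(h_n(\Alpha,t)) = 1/c'(F^{-1}(t))$, so the claim is equivalent to the non-decrease of $\phi(t) := t^{n-1}/c'(F^{-1}(t))$ on $[0,1]$. The plan has three steps: convert the monotonicity condition to a single scalar inequality, bound the partial derivative $h_{n,t}$ by a Bernstein-type telescoping identity, and close with the hypothesis on $c$.

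\textbf{Step 1: reduce to an inequality in $h_{n,t}$.} Implicit differentiation of $c(F^{-1}(t)) = h_n(\Alpha,t)$ yields $(F^{-1})'(t) = h_{n,t}(\Alpha,t)/c'(F^{-1}(t))$, where $h_{n,t}$ denotes $\partial h_n/\partial t$. A direct chain-rule computation then gives
\[
\phi'(t) \;=\; \frac{t^{n-2}}{c'(F^{-1}(t))}\left[(n-1) \;-\; t\cdot\frac{c''(F^{-1}(t))}{[c'(F^{-1}(t))]^2}\cdot h_{n,t}(\Alpha,t)\right].
\]
Recognising the identity $c''/[c']^2 = (-1/c')'$, the non-negativity of $\phi'$ reduces to
\[
t\cdot\left(-\tfrac{1}{c'}\right)'\!\bigl(F^{-1}(t)\bigr)\cdot h_{n,t}(\Alpha,t) \;\leq\; n-1.
\]
At any point where $c''\leq 0$ this is automatic, so only the regime $(-1/c')' > 0$ requires work.

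\textbf{Step 2: Bernstein bound on $h_{n,t}$.} Differentiating the Bernstein-polynomial expression $h_n(\Alpha,t)=\sum_{i=0}^{n-1}\alpha_{i+1}\binom{n-1}{i}t^{n-1-i}(1-t)^i$ term-by-term and using the standard identities $(n-1-i)\binom{n-1}{i}=(n-1)\binom{n-2}{i}$ and $i\binom{n-1}{i}=(n-1)\binom{n-2}{i-1}$, followed by a reindexing of the second sum, produces the clean telescoping form
\[
h_{n,t}(\Alpha,t) \;=\; (n-1)\sum_{i=0}^{n-2}(\alpha_{i+1}-\alpha_{i+2})\binom{n-2}{i}t^{n-2-i}(1-t)^i.
\]
Since each consecutive gap $\alpha_{i+1}-\alpha_{i+2}$ is non-negative and at most $G$, and the Bernstein basis satisfies $\sum_{i=0}^{n-2}\binom{n-2}{i}t^{n-2-i}(1-t)^i = 1$, we obtain $h_{n,t}(\Alpha,t) \leq (n-1)G$ uniformly in $t\in[0,1]$.

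\textbf{Step 3: combine.} Plugging the bound from Step~2 into Step~1, together with $t\leq 1$ and the hypothesis $(-1/c')' < 1/G$,
\[
t\cdot\left(-\tfrac{1}{c'}\right)'\!\bigl(F^{-1}(t)\bigr)\cdot h_{n,t}(\Alpha,t) \;<\; \tfrac{1}{G}\cdot(n-1)G \;=\; n-1,
\]
which is exactly the required inequality, completing the proof.

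\textbf{Main obstacle.} The only genuinely non-trivial step is Step~2: recognising that the derivative of the RO payoff polynomial admits a tidy telescoping form in terms of consecutive reward gaps, which is what lets a single scalar $G$ control $h_{n,t}$ uniformly. Everything else is elementary calculus, aided by the clean reformulation $c''/[c']^2 = (-1/c')'$. I also note that as stated $G=\max_i\{\alpha_i-\alpha_{i-1}\}$ would be non-positive since $\Alpha$ is decreasing; the intended quantity is evidently the maximum consecutive gap $\max_i\{\alpha_{i-1}-\alpha_i\}$, and the proof above uses that reading.
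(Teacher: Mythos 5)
Your proof is correct and follows essentially the same route as the paper's: differentiate the product, identify $(c^{-1})''/(c^{-1})'$ with $-c''/(c')^2 = -(-1/c')'$ via the inverse function theorem, and control $\partial h_n/\partial t$ by the maximal consecutive reward gap. Your Step 2 merely makes explicit the telescoping identity $h_{n,t} = (n-1)\sum_i (\alpha_{i+1}-\alpha_{i+2})\binom{n-2}{i}t^{n-2-i}(1-t)^i$ that the paper invokes implicitly as ``$(n-1)\,\mathbb{E}[\text{gap}]$'', and your reading of $G$ as the maximum consecutive gap is indeed the intended one.
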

\begin{proof}
    Taking derivative of the above, we have
    \begin{align*}
     \frac{\partial (c^{-1})'(h(\Alpha, t))\cdot t^{n-1}}{\partial t} &= (c^{-1})''(h(\Alpha, t))\cdot t^{n-1}\cdot (n-1) \expect{gap} + (n-1) (c^{-1})'(h(\Alpha, t))\cdot t^{n-2} \\
     &= (n-1)t^{n-2}\left( (c^{-1})''(h(\Alpha, t))\cdot t\expect{gap} +  (c^{-1})'(h(\Alpha, t)) \right)\\
     &\geq (n-1)t^{n-2}\left( (c^{-1})''(h(\Alpha, t))\expect{gap} +  (c^{-1})'(h(\Alpha, t)) \right).
    \end{align*}
    Say $z = h(\Alpha, t)$, we have the RHS greater than 0 if
    \[
     (c^{-1})''(z)\expect{gap} +   (c^{-1})'(z) \geq 0.
    \]
    By inverse function theore, we have
    \[
     (c^{-1})'(z) = \frac{1}{c'(c^{-1}(z))}
    \]
    and 
    \[
     (c^{-1})''(z) = \frac{\partial}{\partial z}\frac{1}{c'(c^{-1}(z))} = \frac{- c''(c^{-1}(z))}{[c'(c^{-1}(z))]^2} \cdot \frac{1}{c'(c^{-1}(z))} = \frac{- c''(c^{-1}(z))}{[c'(c^{-1}(z))]^3}.
    \]
    Plugging in, we just want
    \[
     \frac{- c''(c^{-1}(z))}{[c'(c^{-1}(z))]^3}\expect{gap} +   \frac{1}{c'(c^{-1}(z))} \geq 0.
    \]
    which is equivalent to the condition that
    \[
     \frac{c''(w)}{c'(w)^2} \leq  \frac{1}{\expect{gap}} .
    \]
    for any $w \in (0,1)$. Since the RHS always greater than $1$, this concludes the proof.
\end{proof}

\section{Proof of \Cref{prop: RO-opt-mech-Lp}}
\label{app-sec: proof-RO-opt-mech-Lp}
\ROOptimalMechLp*
\begin{proof}
    We still start by figuring out the corresponding well-defined optimisation problem. We use the following change of variable $y = F(q)$. Then, the problem could be reformulated as:
    \[
        \begin{aligned}
         &\max_{\alpha_1, \ldots, \alpha_n} \int_{0}^{1} c^{-1}(h(\Alpha ,y))^p\ \mathsf{d}y. \\
         &\text{s.t.}~~ \alpha_1 \geq \ldots \geq  \alpha_n = 0,~~\sum_{i=1}^n \alpha_i =  1.
        \end{aligned}
    \]
    Consider the derivative:
    \[
     \frac{\partial \mathcal{L}}{\partial \alpha_i} =\underbrace{ \int_{0}^{1} p[c^{-1}(h(\Alpha, y))]^{p-1}\frac{\partial c^{-1}(h(\Alpha, y))}{\partial h(\Alpha, y)}\cdot \binom{n-1}{i}y^{i}(1-y)^{n-1-i} ~\mathsf{d}y }_{>0} 
    \]
    Consider the function $c^{-1}(h(\Alpha, y))^p$. The case $p=1$ corresponds the \cref{prop: RO-optimal-mech-ori}. We only need to show that it is increasing whenever $p$ is large enough. Let $z = h(\Alpha, y)$. Consider its second derivative:
    \[
     \frac{\partial^2 [c^{-1}(z)^p]}{\partial z^2} \propto c^{-1}(z)^{p-2} \cdot \left[(p-1)\cdot \frac{\partial c^{-1}(z)}{\partial z} + c^{-1}(z)\cdot \frac{\partial^2c^{-1}(z)}{\partial z^2}\right].
    \]
    Since $c(1) < \infty$, when $p$ is large enough, the second derivative is positive. Hence, using exactly the same arguments as the proof of \cref{prop: RO-optimal-max}, one can show that the optimal mechanism at this time is ``top-1''.
\end{proof}

\section{Proof of \Cref{thm: structural-barrier-RO}}
Consider the case when there are $n$ players in the game and the new player chooses a fixed strategy $q$. Fix an instance $\omega \in \Omega$. Now, only consider the $n-1$ ``default'' players and the one new player. The expected revenue of the new player is then:
\[
  \sum_{i=0}^{n-1} \binom{n-1}{i}\alpha_{i+1} F(q)^{n-i-1} (1-F(q))^{i} = c(q).
\]
Consider the join of another ``default'' player, the payoff of the new player is always lower in any $\omega$ because his rank could only be unchanged or dropped by one place. Therefore, his expected payoff will only drop. The net payoff is dropped. Hence, it is less than $c(q)$. The only cases that there is no instance that the revenue is decreased is when $q\in \{0,1\}$.

\section{Proof of \Cref{proposition: prop-no-mix-eq}}
\label{app-sec: Proof-of-prop-no-mix-eq}
\NoMixEqProp*
\begin{proof}
    First, we show that the support of any $F_i$ should be a subset of $[0, q_m]$ for some $q_m < 1$. We note that
    \[
     \uPO_i(q_i, q_{-i}) = \frac{q_i}{\sum_{j\in \itemSet}q_j} - c_i(q_i) \leq 1 - c_i(q_i).
    \]
    Since $c_i(0) = 0$ and $\uPO(0, q_{-i}) = 0$, in the NE, by the best response condition, all the producers should at least have $\uPO(q_i ,q_{-i}^\star) \geq 0$ for all $q_i \in \text{supp}(F_i)$. Let $q_{m,i}$ be the values that satisfies $c_i(q_{m,i}) = 1$ and $q_m = \max_{i\in\itemSet} \{q_{m,i}:~i\in \itemSet\}$. We shall have $q_m <  1$ since $c_i(1) > 1$.

    Next, suppose there exists $\F = (F_i)_{i\in \itemSet}$ which is a mixed strategy mixed equilibrium. Then, by the best response condition, the following system of equations should be satisfied
    \[
    \frac{\partial}{\partial q_i}\expectsub{F_{-i}}{u(q_i, q_{-i})} = 
    \expectsub{F_{-i}}{\frac{\partial}{\partial q_i}u(q_i, q_{-i})} = 
    \expectsub{F_{-i}}{\frac{\sum_{j\in \itemSet \setminus i} q_j}{\left(\sum_{k\in \itemSet} q_k\right)^2} - c_i'(q_i)}
    = 0,~~\forall i \in \itemSet,
    \]
    where the expectation is taken over the densities $F_{-i} = (F_j)_{j \in \itemSet\setminus i}$. The first equality is obtained by the bounded convergence theorem and the fact that $c_i'$ is bounded within $[0,q_m]$. By strict convexity of $c_i$, there only exists one $q_i$ such that the above equation is satisfied. Hence, the mixed strategy equilibrium is reduced to the pure strategy equilibrium, as desired.
\end{proof}

\section{Proof of \Cref{proposition: NE-exist-PM}}
\label{app-sec: Proof-of-NE-exist-PM}
We introduce another technical tool to better investigate the equilibrium behaviour. We perform the following variable transformation: Let $\beta^\star = \sum_{i \in \itemSet} q_i^\star$, and $x_i^\star = \frac{q_i^\star}{\sum_{j\in \itemSet} q_j^\star}$. Then, first order conditions for contributing NE could be rewritten as
    \begin{align}
        &x_i^\star + \beta^\star\cdot c_i'(\beta^\star x_i^\star) = 1, ~~\forall i\in \itemSet,  \label{eq: prop-eq-char-no-scale-line1}\\
        &\sum_{i\in \itemSet} \beta^\star c_i'(\beta^\star x_i^\star) = n-1. \label{eq: prop-eq-char-no-scale-line2}
    \end{align}
One obvious benefit of such a variable transformation is that the variables no longer exist in the denominators. Further, we notice that \cref{eq: prop-eq-char-no-scale-line1} corresponds to the first-order optimality conditions of the minimiser of a function $f_{\beta^\star}$, where
\[
f_{\beta}(\x) \defeq  \sum_{i\in \itemSet}  c_i(\beta x_i) + \frac{1}{2}(x_i-1)^2.
\]
And formally, the Nash equilibrium $(\x^\star, \beta^\star)$ satisfies
\[
\x^\star \in \argmin_{\x \in \rr^n} f_{\beta^\star}(\x).
\]
\PMExist*
\begin{proof}

    If $\frac{c_i'(0)}{\sum_{i \in \itemSet} c_i'(0)} \leq \frac{1}{n-1}$. Construct function $G: \mathbb{R}_+ \rightarrow \mathbb{R}^n$ which maps $\beta$ to the root of the system of equations \cref{eq: prop-eq-char-no-scale-line1} parameterised by $\beta$. Indeed, $G$ is continuous. To see this, we fix some $\beta_0 > 0$, and let $x_{0,i}$ satisfy $x_{i,0} + \beta_0 c_i'(\beta_0 x_{i,0}) =1 $. For some $\beta_1$ that $\abs{\beta_1 - \beta_0} \leq \delta$, let $x_i'$ satisfy $x_i' + \beta_1c_i'(\beta_1x_i') = 1$. Since $c_i'$ is differentiable, then it is locally Lipschitz continuous (at the point $\beta_0 x_{i,0}$ with Lipschitz constant $L^+$ (say). Then, we have 
    \begin{align*}
        \abs{x_{i,0} + \beta_1 c_i'(\beta_1 x_{i,0}) - 1} &= \abs{x_{i,0} + \beta_1 c_i'(\beta_1 x_{i,0}) - \left[x_{i,0} + \beta_0c_i'(\beta_0 x_{i,0})\right]} \\
        &= \abs{\beta_1 c_i'(\beta_1 x_{i,0}) - \beta_0c_i'(\beta_0 x_{i,0})}\\
        &\leq \abs{\beta_1 c_i'(\beta_1 x_{i,0}) - \beta c_i'(\beta_0 x_{i,0})} + \abs{\beta_1 c_i'(\beta_0 x_{i,0}) - \beta_0c_i'(\beta_0 x_{i,0})} \\
        &\leq \beta_1\abs{c_i'(\beta_1 x_{i,0}) -  c_i'(\beta_0 x_{i,0})} + \abs{\beta_1 - \beta_0}\cdot c_i'(\beta_0 x_{i,0}) \\
        & \leq \beta_1 L^+ x_{i,0}\abs{\beta_1 - \beta_0} + \abs{\beta_1 - \beta_0}\cdot c_i'(\beta_0 x_{i,0}) \\
        & \leq (\beta_0 + \delta) L^+ x_{i,0}\abs{\beta_1 - \beta_0} + \abs{\beta_1 - \beta_0}\cdot c_i'(\beta_0 x_{i,0}).
    \end{align*}
    And also, we have 
    \begin{align*}
        \abs{x_{i,0} + \beta_1 c_i'(\beta x_{i,0}) - 1} &= \abs{x_{i,0} + \beta_1 c_i'(\beta_1 x_{i,0}) - \left[x_i' + \beta_1 c_i'(\beta_1 x_i') \right]}  \\
        & = \abs{x_{i,0} - x_i' + \beta_1 c_i'(\beta_1 x_{i,0}) - \beta_1 c_i'(\beta_1 x_i')} \\
        & = \abs{x_{i,0} - x_i'} + \abs{\beta_1 c_i'(\beta_1 x_{i,0}) - \beta_1 c_i'(\beta_1 x_i')} \\
        & \geq \abs{x_{i,0} - x_i'},
    \end{align*}
    where the third line follows from the fact that $\abs{x_{i,0} - x_i'}$ and $\abs{\beta_1 c_i'(\beta_1 x_{i,0}) - \beta_1 c_i'(\beta_1 x_i')}$ are of the same signs. Thus, in summary 
    \[
    \abs{x_{i,0} - x_i'} \leq  \abs{x_{i,0} + \beta_1 c_i'(\beta x_{i,0}) - 1} \leq (\beta_0 + \delta) [L^+ x_{i,0} + c_i'(\beta_0 x_{i,0})] \cdot \abs{\beta_1 - \beta_0},
    \]
    which proves that $G$ is continuous. And hence, $\normOne{G}$ is continuous.
    
    It is clear that $G(0) = \mathbf{1}_n$, hence $\normOne{G(0)} = n$. If we set $\beta = \frac{n-1}{\sum_{i\in \itemSet} c_i'(0)}$, then, consider the root of \cref{eq: prop-eq-char-no-scale-line1} $\widehat{\x} \defeq G\left(\frac{n-1}{\sum_{i\in \itemSet} c_i'(0)}\right)$, we could see that
    \[
    \sum_{i\in \mathcal{I}} \widehat{x_i} = n - \sum_{i\in \itemSet}\beta c_i'(\beta \widehat{x_i})  < n - \beta \cdot \sum_{i\in \itemSet} c_i'(0) = n - (n-1) = 1,
    \]
    where the inequality follows from the fact that  $c_i'(\beta \widehat{x_i}) > c_i'(0)$. And also, we notice that, for $\x =0$
    \[
    \widehat{x_i} + \beta c_i'(\beta \widehat{x_i}) = \beta c_i'(0) = \frac{c_i'(0)(n-1)}{\sum_{j\in \itemSet} c_j'(0)} \leq 1.
    \]
    Therefore, it turns out that $\widehat{\x} \geq 0$. In summary, $\normOne{G(0)} = n$ and $\normOne{ G\left(\frac{n-1}{\sum_{i\in \itemSet} c_i'(0)}\right)} < 1$. By the continuity of $\normOne{G(\beta)}$ and the intermediate value theorem, there exists some $\beta^\star \in \left(0,\frac{n-1}{\sum_{i\in \itemSet} c_i'(0)}\right) $ such that $\normOne{G(\beta^\star)} = 1$ and $G(\beta^\star) \geq 0$.
\end{proof}

\section{Proof of \Cref{prop: contributing-subset}}
\label{app-sec: Proof-of-contributing-subset}
\ContributingSubset*
\begin{proof}
    In the equilibrium, we partition the set $\itemSet$ to the set of participating producers $\mathcal{P}$ and the non-participating producers $\mathcal{N}$. For the non-participating producers $i \in \mathcal{N}$ (say), it holds that $0 \in \argmax \uPO_i(q_i^\star, q_{-i}^\star)$. Then, according to \cref{eq: prop-eq-ori}, we should have \begin{equation}
    0 \geq (\uPO_i)'(0, q_{-i}^\star)  = \frac{\sum_{j\in \itemSet \setminus i}q_j^\star}{(\sum_{k\in \itemSet }q_k^\star)^2} - c_i'(0)
     = \frac{\sum_{j\in \mathcal{P}}q_j^\star}{(\sum_{k\in  \mathcal{P} }q_k^\star)^2} - c_i'(0) = \frac{1}{\sum_{k\in  \mathcal{P} }q_k^\star} - c_i'(0). \label{app-eq: non-contributing-condition}
        \end{equation}
    Similarly, for the participating producers $l\in \mathcal{P}$,
    \[
    0 \leq (\uPO_l)'(0, q_{-l}^\star)  = \frac{\sum_{j\in \itemSet \setminus l}q_j^\star}{(\sum_{k\in \itemSet }q_k^\star)^2} - c_l'(0)
     = \frac{\sum_{j\in \mathcal{P}}q_j^\star}{(\sum_{k\in  \mathcal{P} }q_k^\star)^2} - c_l'(0) = \frac{1}{\sum_{k\in  \mathcal{P} }q_k^\star} - c_l'(0).
    \]
    Hence, we have
    \[
    \max_{l\in \mathcal{P}} c_l'(0) \leq \frac{1}{\sum_{k\in  \mathcal{P} }q_k^\star} \leq \min_{i\in \mathcal{N}} c_i'(0).
    \]
    So, we must have $\mathcal{P} = \{1, \ldots, k\}$ for some $k \leq n$. Next, we need to show that $k$ is maximal. Suppose $\mathcal{P} = \{1,\ldots,k\}$ for some non-maximal $k$. Then, the set $\{1, \ldots, k+1\}$ can also facilitate an equilibrium. In this case, consider the equation system \cref{eq: prop-eq-char-no-scale-line1}, \cref{eq: prop-eq-char-no-scale-line2}, both $\mathcal{P}$ and $\mathcal{P} \cup \{k + 1\}$ could lead to a valid solution. However, for the equation system induced by $\mathcal{P} \cup \{k + 1\}$, it holds that $\beta^\star(k+1) \geq \beta^\star(k)$, where $\beta^\star(k)$ indicates the solution $\beta^\star$ of the equation system induced by $\mathcal{P}$. To validate this fact, one can try to plug $\beta^\star(k)$ into the equation system induced by $\mathcal{P} \cup \{k + 1\}$. This leads to \cref{eq: prop-eq-char-no-scale-line2} does not hold. Therefore, $\beta^\star(k+1)$ must be larger (if it is smaller, it also does not make sense since either $x_i$ of the LHS of \cref{eq: prop-eq-char-no-scale-line2} will be even larger).

    Hence, in summary, we since $\beta^\star(k+1) \geq \beta^\star(k)$, by considering \cref{eq: prop-eq-char-no-scale-line1} with index $k+1$, we have $\beta^\star(k+1) c'_{k+1} \leq 1$. Then,
    \[
    c_{k+1}'(0) \leq \frac{1}{\beta^\star(k+1)} \leq  \frac{1}{\beta^\star(k)} = \frac{1}{\sum_{k \in \mathcal{P}} q_k^\star}.
    \]
    where the last equality follows from the fact that $\beta^\star(k) = \sum_{k \in \mathcal{P}} q_k^\star$, which is the change of variable we have used. This contradicts with \cref{app-eq: non-contributing-condition}. 
\end{proof}

\section{Proof of \Cref{cor: entry-barrier-homogeneous}}
\label{app-sec: proof-of-entry-barrier-homogeneous}
\BarrierHomo*
\begin{proof}
    From \cref{eq: prop-eq-char-no-scale-line2}, we have
    \[
    \beta^\star = \frac{n-1}{\sum_{i\in \itemSet} c_i'(\beta^\star x_i^\star)}.
    \]
    Looking at \cref{eq: prop-eq-char-no-scale-line1}, since $x_i > 0$, we observe that 
    \(
      \beta^\star \cdot \bar{c}'(0) < 1.
    \)
    Therefore, $\beta^\star < \frac{1}{\bar{c}'(0)}$ no matter what $n$ is. And the value $\beta^\star x_i^\star < \frac{1}{n \cdot \bar{c}'(0)}$ where the RHS could be arbitrarily small. Therefore, as $n$ goes to infinity, the value 
    \[
     \lim_{n\rightarrow \infty} \frac{1}{\beta^\star}= \lim_{n\rightarrow \infty} \frac{\sum_{i\in \itemSet} c_i'(\beta^\star x_i^\star)}{n-1} = \lim_{n\rightarrow \infty} \frac{n \cdot\bar{c}'(0)}{n-1}  = \bar{c}'(0),
    \]
    as desired.
\end{proof}

\section{Proof of \Cref{lem: EFRM-no-harm}}
\begin{proposition}\label{prop:symmetric-char-low-cost}
Suppose $\alpha_1 > \alpha_n \ge c(0)$. Let $F$ be the continuous c.d.f.~below:
\[
F(q) ~=~ \begin{cases}
0, & \text{if }q\le 0;\\
\Rinv(c(q)-c(0)+\alpha_n), & \text{if }0 < q < c^{-1}(\alpha_1 - \alpha_n + c(0));\\
1, & \text{if }q\ge c^{-1}(\alpha_1 - \alpha_n + c(0)).
\end{cases}
\]
Then $(F,F,\ldots,F)$ is a symmetric NE.
\end{proposition}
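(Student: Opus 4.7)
The formula for $F$ is dictated by the indifference principle for a symmetric mixed NE: on the support of $F$, every quality level must yield the same expected utility, and at the bottom of the support (which is $q=0$, since $F(0)=0$) a deviator is almost surely ranked last, pinning that common utility to $\alpha_n - c(0)$. My plan is to verify (i) $F$ is a well-defined continuous c.d.f.; (ii) when the other $n-1$ creators independently draw from $F$, the deviator's expected utility is identically $\alpha_n - c(0)$ on the support; and (iii) deviations above the support are weakly worse. If all three hold, $(F,\ldots,F)$ is a symmetric NE by \Cref{def: Nash Equilibrium}.

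For step (i), write $R(t) := h_n(\Alpha, t)$, which is the $R$ whose inverse appears in the statement. Then $R(0) = \alpha_n$ and $R(1) = \alpha_1$, and I would argue $R$ is strictly increasing on $[0,1]$ by viewing it as $\expect{\alpha_{I+1}}$ with $I \sim \mathrm{Binomial}(n-1, 1-t)$: as $t$ increases, $I$ decreases stochastically, and since $\Alpha$ is weakly decreasing with $\alpha_1 > \alpha_n$, $\expect{\alpha_{I+1}}$ strictly increases. Hence $R^{-1}$ is a continuous bijection from $[\alpha_n, \alpha_1]$ onto $[0,1]$. Because $\alpha_n \geq c(0)$ and $c$ is strictly increasing and continuous, the argument $c(q) - c(0) + \alpha_n$ traverses $[\alpha_n, \alpha_1]$ exactly as $q$ traverses $[0,\, c^{-1}(\alpha_1 - \alpha_n + c(0))]$, so $F$ is a well-defined continuous c.d.f.

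For step (ii), continuity of $F$ means opponent qualities are almost surely distinct, so the number of opponents whose quality strictly exceeds a deviator's value $q$ is $\mathrm{Binomial}(n-1, 1-F(q))$; summing reward $\alpha_{i+1}$ over this distribution yields expected reward $R(F(q))$. On the support, the defining identity $R(F(q)) = c(q) - c(0) + \alpha_n$ gives expected utility $\alpha_n - c(0)$, constant in $q$ and nonnegative by hypothesis. For step (iii), if $q > c^{-1}(\alpha_1 - \alpha_n + c(0))$ then $F(q)=1$, so the deviator takes the top rank almost surely and earns utility $\alpha_1 - c(q) \leq \alpha_1 - (\alpha_1 - \alpha_n + c(0)) = \alpha_n - c(0)$ by the monotonicity of $c$.

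The main technical obstacle is establishing the strict monotonicity of $R$ on $[0,1]$, since it underlies the well-definedness of $R^{-1}$ in the formula for $F$; the stochastic-dominance argument above is the cleanest route, while a direct differentiation provides a fallback computation. A minor subtlety is the tie-breaking convention at $q=0$ and at the upper endpoint of the support, but since $F$ is atomless these ties occur with probability zero and do not affect the above calculations. Assembling the three steps shows that $F$ is a best response to itself, which is exactly the symmetric NE condition.
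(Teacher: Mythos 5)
Your proposal is correct and follows essentially the same route as the paper's own proof: compute the deviator's expected reward as $R(F(q))$, observe that the defining identity for $F$ makes the expected utility identically $\alpha_n - c(0)$ on $[0, c^{-1}(\alpha_1-\alpha_n+c(0))]$, and check that qualities above this interval yield utility at most $\alpha_1 - c(q) \le \alpha_n - c(0)$. The only difference is that you additionally verify that $R = h_n(\Alpha,\cdot)$ is strictly increasing so that its inverse (and hence $F$) is well defined, a point the paper's proof takes for granted; this is a worthwhile addition rather than a different argument.
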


\begin{proof}    First, note that $R(F(q))$ is the expected reward received by a player at $(F,F,\ldots,F)$ when she chooses quality measure $q$.
Let $q_{\max} = c^{-1}(\alpha_1 - \alpha_n + c(0))$. Observe that $R(F(0)) = R(0) = \alpha_n$, and $R(F(q_{\max})) = R(1) = \alpha_1$.
The expected utility $u(q) = R(q) - c(q) = \alpha_n - c(0) \ge 0$ when $0\le q \le q_{\max}$.
When $q > q_{\max}$, the expected utility is $\alpha_1 - c(q) < \alpha_1 - c(q_{\max})= \alpha_n - c(0)$.
Thus, $u(q)$ is maximized in the interval $0\le q \le q_{\max}$, which matches with the support of $F$.
\end{proof}
Since the EFRM always ensures $\alpha_n = c(0) = \xi$. The equilibrium structure will not be changed if $\xi$ is charged and given back trivially.

\section{Proof of \Cref{thm: optimal-subsidy-scheme}}
Both results are immediate from \Cref{app-lem: RO-action-improve-mean} and \Cref{app-lem: RO-action-improve-max}. Specifically, by \Cref{prop:symmetric-char-low-cost}, there is no change to the equilibrium structure rather than the flexibility to reassign the entry fees. \Cref{app-lem: RO-action-improve-mean} pointed out that the max-min mechanism is the optimal way and \Cref{app-lem: RO-action-improve-max} pointed out that the max-max mechanism is the optimal way.

\section{Proof of \Cref{prop: goodness-of-commision}}
\label{app-sec: proof-goodness-of-commission}
\GoodnessOfCommission*
\begin{proof}
    First, from \cref{eq: prop-eq-char-no-scale-line1}, since $c_i' > c_j'$, we must have $x_i^\star < x_j^\star$ at the equilibrium (which means $q_i^\star < q_j^\star$).
     Consider the profit $u_i$,
    \begin{align*}
     u_i(q_i^\star, \mathbf{q}^\star_{-i}) 
     = \frac{q_i^\star}{\sum_{k\in \itemSet} q_k^\star} - c_i(q_i^\star)
     &<  \frac{q_i^\star}{q_i^\star + q_j^\star  + \sum_{k\in \itemSet \setminus \{i,j\}} q_k^\star} - c_j(q_i^\star) \\
     &\leq \frac{q_i^\star}{q_i^\star + q_i^\star  + \sum_{k\in \itemSet \setminus \{i,j\}} q_k^\star} - c_j(q_i^\star) \\
     &\leq \frac{q_j^\star}{q_j^\star + q_i^\star  + \sum_{k\in \itemSet \setminus \{i,j\}} q_k^\star} - c_j(q_j^\star)
     =  u_j(q_j^\star, \mathbf{q}^\star_{-j}).
    \end{align*}
    The first line follows from $c_i > c_j$, the second line follows from $q_i^\star < q_j^\star$ and the last line follows from the optimality of $q_j^\star$.
\end{proof}

\end{document}